%
\documentclass[acmsmall,screen,natbib=false]{acmart} 
\usepackage{graphicx}
\usepackage{xcolor}
\usepackage{threeparttable}
\usepackage{multirow}
\usepackage{booktabs}
\usepackage{amsmath}
\usepackage{amsthm}
\usepackage{stmaryrd}
\usepackage{enumitem}
\usepackage{mathpartir}
\usepackage{tlatex}
\usepackage{mathrsfs}
\usepackage{multicol}
\usepackage{subcaption}
\usepackage{lineno}
\usepackage{ragged2e}
\usepackage{appendix}

\RequirePackage[
datamodel=acmdatamodel,
style=acmnumeric, 
]{biblatex}
\addbibresource{sources.bib}

\AtBeginEnvironment{multicols}{\RaggedRight}

%
\newcommand\ian[1]{}
\newcommand\eunsuk[1]{}

\newcommand\flogic{SFL}

\newcommand\True{\textsc{true}}
\newcommand\False{\textsc{false}}
\newcommand\TLA{TLA\textsuperscript{+}}

\newcommand{\act}{Act\ }

\newcommand{\ag}[1]{\langle\hspace{-.9mm}\langle #1 \rangle\hspace{-.9mm}\rangle}
\newcommand{\sag}[1]{\langle #1 \rangle}

\newcommand\st{\ : \ }
\newcommand\tlastate[1]{
    \begin{array}{lll}
        #1
    \end{array}
}

\newtheorem{remark}{Remark}

\setlist[itemize]{leftmargin=*}

\begin{document}

\title{Compositional Inductive Invariant Inference via Assume-Guarantee Reasoning}

\author{Ian Dardik}
\email{idardik@andrew.cmu.edu}
\affiliation{%
  \institution{Carnegie Mellon University}
  \city{Pittsburgh}
  \state{Pennsylvania}
  \country{USA}
}

\author{Eunsuk Kang}
\email{eunsukk@andrew.cmu.edu}
\affiliation{%
  \institution{Carnegie Mellon University}
  \city{Pittsburgh}
  \state{Pennsylvania}
  \country{USA}
}

\begin{abstract}
A common technique for verifying the safety of complex systems is the \textit{inductive invariant} method.
Inductive invariants are inductive formulas that overapproximate the reachable states of a system and imply a desired safety property.
However, inductive invariants are notoriously complex, which makes \textit{inductive invariant inference} a challenging problem.
In this work, we observe that inductive invariant formulas are complex primarily because they must be closed over the transition relation of an entire system.
Therefore, we propose a new approach in which we decompose a system into components, assign an assume-guarantee contract to each component, and prove that each component fulfills its contract by inferring a \textit{local inductive invariant}.
The key advantage of local inductive invariant inference is that the local invariant need only be closed under the transition relation for the component, which is simpler than the transition relation for the entire system.
Once local invariant inference is complete, system-wide safety follows by construction because the conjunction of all local invariants becomes an inductive invariant \textit{for the entire system}.
We apply our compositional inductive invariant inference technique to two case studies, in which we provide evidence that our framework can infer invariants more efficiently than the global technique.
Our case studies also show that local inductive invariants provide modular insights about a specification that are not offered by global invariants.
\end{abstract}
\maketitle              
%
%
%


\section{Introduction}
A common approach to verifying the safety of complex systems, such as distributed protocols, is to find an \textit{inductive invariant}.
An inductive invariant is a logical formula that overapproximates the reachable state space of a system (invariance), is closed with respect to the system's transition relation (inductiveness), and also implies safety; together, these conditions imply that the system is safe.
However, inductive invariant formulas can be notoriously large and complex, even for smaller systems.
As a result, the important problem of \textit{inductive invariant inference} remains a significant research challenge.

Over the past several decades, researchers have developed principled approaches for traversing the large space of candidate inductive invariant formulas.
One common type of method is the \textit{incremental} approach \cite{bradley:2011,Goel:2021,Koenig:2020,Koenig:2022,Zohar:1995,schultz:2022indinvs} that accumulates non-inductive invariants--referred to as \textit{lemmas}--until their conjunction becomes inductive.
Yet another approach is to efficiently perform a bounded enumeration of candidate formulas \cite{Hance:2021,Yao:2021}.
Techniques such as these have made inductive invariant inference applicable to a larger range of systems, and have been used to automatically verify sophisticated distributed protocols such as Paxos \cite{Lamport:1998}.

Despite progress in taming the search space, state-of-the-art techniques still struggle to infer complex inductive invariants that are necessary for verifying large systems.
Our key observation is that inductive invariants become large and complex primarily because the formula must be closed under the transition relation of the entire system.
Therefore, instead of attempting to synthesize a global invariant for the entire system, in this paper, we propose a \textit{compositional} approach that aims to find multiple simpler inductive invariants.
Our approach involves decomposing a specification into components and separately inferring a \textit{local inductive invariant} for each component.
Local inductive invariants are only required to be closed under the transition relation of the corresponding component, rather than the entire system.
The advantage to local inductive invariant inference is that each component has a simpler transition relation than the overall system and, therefore, presents a simpler inference task.

Our proposed compositional approach is based on a novel \textit{assume-guarantee reasoning} \cite{Pnueli:1985,Giannakopoulou:2018handbook} theory.
Assume-guarantee reasoning is a verification paradigm in which each component $C$ is assigned an individual proof obligation $\sag{\alpha} C \sag{\gamma}$ called a \textit{contract}.
Intuitively, a contract establishes that $C$ guarantees the property $\gamma$ under the assumption $\alpha$.
In our theory, the goal is to infer a local inductive invariant for each component to show that it fulfills its contract.

In order to define an assume-guarantee theory that allows for proof via local inductive invariant, we consider both \emph{action-based} and \emph{state-based} behavioral semantics.
The need to consider both semantic paradigms arises because assume-guarantee contracts are a means for both \textit{composition} and \textit{proof localization}.
On the one hand, action-based semantics are convenient for composing contracts because components can be treated algebraically and composed via parallel composition.
For example, action-based composition styles from CSP \cite{Hoare1978} and the $\pi$-calculus \cite{Milner:1992} are often used in assume-guarantee reasoning for concurrent and distributed systems \cite{Magee:2006,Magee:1999}.
On the other hand, proofs via (local) inductive invariants require state-based semantics because inductive invariants are formulas over state variables.

In this paper, we introduce a two-layered assume-guarantee theory that offers the advantages of both semantic paradigms.
The top layer of the theory is action-based and intended for compositional reasoning, while the bottom layer is state-based and intended for proof via local inductive invariant.
In the two-layer approach, we decompose a system into action-based contracts in the top layer theory, which we then translate to state-based contracts in the bottom-layer theory to be proved.
The key to translating contracts from the action-based theory to the state-based theory is a novel logic language that we propose called \emph{Symbolic Fluent Linear Temporal Logic} (\emph{\flogic{}}).
\flogic{} formulas are action-based, but can be translated to a semantically equivalent state-based formula.
In other words, given an action-based contract $\sag{\alpha} C \sag{\gamma}$ where $\alpha$ and $\gamma$ are \flogic{} formulas, we can translate it to an equivalent state-based contract $\ag{A} C \ag{G}$ where $A$ and $G$ are state-based formulas.

Using our two-layered assume-guarantee theory, we present \textbf{the first divide-and-conquer approach to inductive invariant inference}.
The approach involves four steps, which are shown  in Fig.~\ref{fig:overview}.
Given a system that we wish to verify, the first step is to decompose it into a set of components.
The second step is to find \flogic{} formulas $\alpha$ and $\gamma$ for each component $C$ to create a candidate action-based contract $\sag{\alpha} C \sag{\gamma}$.
Third, for each component, we infer a local inductive invariant that proves its corresponding state-based contract $\ag{A} C \ag{G}$.
Local inductive invariant inference reduces to the usual inductive invariant inference problem, which can be accomplished using existing search-based techniques.
Therefore, our work is complementary to--rather than a replacement for--existing work in inductive invariant inference.
The fourth and final step is to take the conjunction of all local inductive invariants; by construction, the resulting formula is an inductive invariant \textit{for the entire system}.

We evaluate our technique on two case studies of distributed protocols, the first of which is the classic Two Phase Commit Protocol \cite{Gray:2006}.
The second case study is an industrial-scale protocol of a Raft \cite{Ongaro:2014} style algorithm that runs at MongoDB \cite{Mongodb}.
While the MongoDB case study is beyond the reach of all existing automatic inductive invariant inference techniques, we use our compositional inference framework to infer an inductive invariant semi-automatically.
Our results in the two case studies provide evidence that our divide-and-conquer approach can be more efficient than the global approach for inductive invariant inference.
Additionally, we show that the artifacts from compositional inference--local inductive invariants and assume-guarantee contracts--provide valuable insights into the behaviors of the system that the global approach does not offer.


In summary, our contributions are:
\begin{enumerate}
\item A two-layered assume-guarantee theory, whose contracts we prove using the novel proof by local inductive invariant method,
\item Symbolic Fluent Linear Temporal Logic (\flogic{}), a logic language for specifying action-based behavioral properties of parameterized systems,
\item A compositional verification framework driven by a divide-and-conquer inductive invariant inference algorithm,
\item Two case studies in which we show the efficacy of our verification framework and highlight the insights that the compositional approach offers about the verified system.
\end{enumerate}

The rest of the paper is structured as follows.
In Sec.~\ref{sec:background} we introduce the running example and background information for the paper.
In Sec.~\ref{sec:state-theory} we introduce the bottom layer state-based theory, followed by the top layer action-based theory in Sec.~\ref{sec:action-theory}.
Using the two layered assume-guarantee theory, we introduce our compositional inductive invariant inference framework in Sec.~\ref{sec:ii-infer}.
We present the case studies in our evaluation in Sec.~\ref{sec:evaluation}, followed by a survey of related work in Sec.~\ref{sec:related-work}.
Finally, we conclude in Sec.~\ref{sec:future-work} with a discussion of the limitations of this paper and future work.

\begin{figure}
    \centering
    \includegraphics[width=0.8\linewidth]{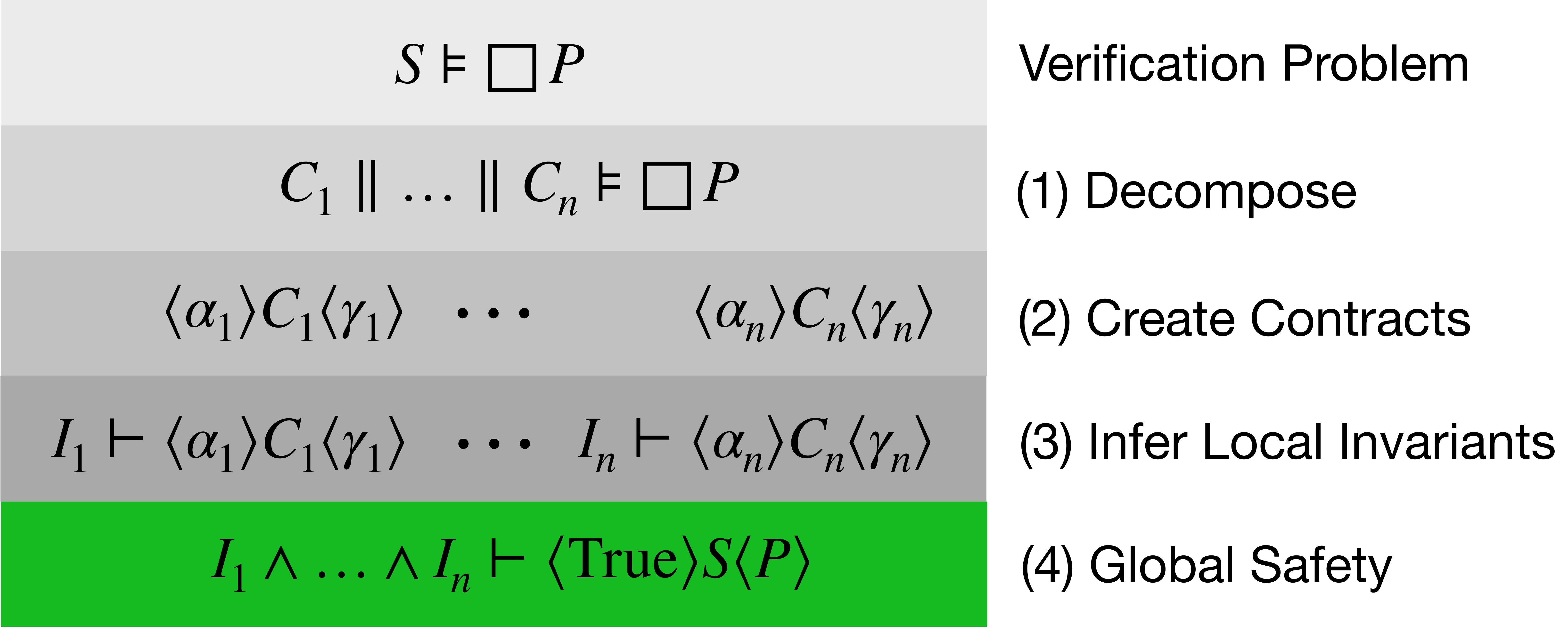}
    \caption{Overview of our compositional inductive invariant inference framework.
        Each layer in the figure represents an equivalent verification problem.
        From top to bottom, the figure shows the original verification problem, decomposition of the system $S$ into components, creation of assume-guarantee contracts for each component, local inductive invariant inference for each component, and proof of global safety by construction.
        The notation $I \vdash \sag{\alpha} C \sag{\gamma}$ indicates that $I$ is a local inductive invariant for the assume-guarantee contract.}
    \label{fig:overview}
\end{figure}
\section{Background}
\label{sec:background}
We now introduce background information, including the running example, the \TLA{} formal specification language, parameterized systems, and the inductive invariant proof method.

\paragraph{Running Example: Toy-2PC}
We now introduce a toy version of the classic Two Phase Commit protocol \cite{Gray:2006}, Toy-2PC, which we use as a running example throughout the paper.
Toy-2PC is a distributed database commit protocol in which a \textit{transaction manager} (TM) communicates with a set of \textit{resource managers} (RMs) in attempt to commit a database transaction.
The protocol unfolds over two phases, which we now describe.
In the first phase, each RM starts in the \textit{working} state as it attempts to commit the transaction; any RM that can commit a transaction sends a \textit{prepared} message to the TM.
In the second phase, the TM will issue a \textit{commit} message to each RM if they are all prepared, or an \textit{abort} message otherwise. 
The safety property for the protocol is \textit{consistency}, meaning that no two RMs should disagree as to whether a transaction was committed or aborted.
For simplicity, Toy-2PC assumes a perfect network, while the original protocol allows messages to be delayed, dropped, and reordered.

\paragraph{\TLA{}}
We will use the \TLA{} formal specification language \cite{lamport2002specifying} for the running example as well as our case studies.
\TLA{} is based on first-order logic (FOL) and temporal logic, and is often used for specifying parameterized symbolic transition systems and their temporal properties.

For example, Fig.~\ref{fig:toy-2pc} shows a specification for Toy-2PC.
In Fig.~\ref{fig:toy-2pc}, lines 1 and 2 of the specification respectively declare parameters (via the \textsc{constant} keyword) and state variables (via the \textsc{variables} keyword).
Lines 3-6 specify the initial state predicate, while the remainder of the specification encodes the actions of the symbolic transition relation.
The symbolic transition relation is the formula $\exists r \in RMs : Prepare(r) \lor Commit(r) \lor Abort(r) \lor SilentAbort(r)$; however, for brevity, we will omit the transition relation formula from the specifications shown in this paper.
In the actions, prime operators (') are used to denote the value of a variable at the following time step and the \textsc{unchanged} keyword explicitly indicates the frame condition for an action.
The notation on line 4 uses square brackets to define $rmState$ as a function, while the notation using \textsc{except} on line 11 changes one value of the function $rmState$.
Line 8 accesses the value of the $rmState$ function, while line 13 uses angle brackets to specify a tuple.

The safety property $Consistent$ for the Toy-2PC protocol can be written as an invariant in \TLA{}, shown in Fig.~\ref{fig:toy-2pc-consistent}.
An invariant is a formula of the form $\Box P$, where $\Box$ is the \textit{always} temporal operator and $P$ is a non-temporal FOL formula.
Oftentimes, we will also refer to non-temporal FOL formulas as invariants, in which case the outer $\Box$ is implied; this is the case for the formula for $Consistent$.
One can prove that the Toy-2PC protocol satisfies $\Box Consistent$ using the inductive invariant proof method, which we describe later in this section.
\begin{figure}
    \begin{subfigure}[t]{0.99\textwidth}
        {\tlatex \@x{}\moduleLeftDash\@xx{ {\MODULE} ToyTwoPhase}\moduleRightDash\@xx{}}
    \end{subfigure}
    {\tlatex \@pvspace{2.0pt}}
    \begin{subfigure}[t]{0.46\textwidth}
        \internallinenumbers{\small \input{tla/toy_2pc_left}}
    \end{subfigure}
    \begin{subfigure}[t]{0.47\textwidth}
        \internallinenumbers{\small \input{tla/toy_2pc_right}}
    \end{subfigure}
    {\tlatex \@pvspace{6.0pt}}
    {\tlatex \@x{}\bottombar\@xx{}}
    \caption{The \TLA{} specification for Toy-2PC.}
    \label{fig:toy-2pc}
\end{figure}
\begin{figure}
    {\tlatex \@xx{Consistent \.{\defeq} \.\A\, r1 ,\, r2 \.{\in} RMs \.{:} {\lnot} ( rmState [ r1 ] \.{=}\@w{abort} \.{\land}\, rmState [ r2 ] \.{=}\@w{commit} )}}
    \caption{The definition for $Consistent$, the key safety property of Toy-2PC.}
    \label{fig:toy-2pc-consistent}
\end{figure}

\paragraph{Parameterized Symbolic Transition Systems}
A parameterized symbolic transition system (PSTS) is an indexed triple $(vars,Init,Next)^p$, where $p$ is a (possibly empty) set of parameters, $vars$ is a set of state variables, and $Init$ and $Next$ are first-order logic formulas for the initial-state constraint and the symbolic transition relation respectively.
Syntactically, $Init$ may reference the parameters in $p$ and the variables in $vars$, while $Next$ may reference the parameters in $p$ and the variables in $vars \cup vars'$, where $vars' = \{ v' \mid v \in vars \}$.
For example, the \TLA{} specification for Toy-2PC in Fig.~\ref{fig:toy-2pc} defines a PSTS $(vars,Init,Next)^{\{RMs\}}$, where $vars$ and $Init$ are defined in Fig.~\ref{fig:toy-2pc}, and $Next$ is the (existentially quantified) disjunction of the actions in the specification $\exists r \in RMs : Prepare(r) \lor \dots$
Parameters are also associated with domains, e.g., $RMs$ is associated with a countably infinite set of possible resource managers.

We will use the notation $\act S$ to denote the alphabet of a PSTS $S$, $Params(S)$ for the set of parameters, and $SV(S)$ for the set of state variables.
We consider actions to be \textit{event formulas}, meaning that we will treat them as both events (e.g., in alphabets) and formulas (e.g., in the transition relation).
We also overload $Params(F)$ and $SV(F)$ to denote the set of parameters and the set of state variables for a formula $F$.
As an example, we have $\act Toy2PC = \{Prepare(rm), Commit(rm),$ $Abort(rm),$ $SilentAbort(rm)\}$, $Params(Toy2PC) = \{RMs\}$, and $SV(Toy2PC) = \{rmState, tmState, tmPrepared\}$.
Notice that, as event formulas, actions such as $Prepare(rm)$ appear in $\act Toy2PC$, but also have formula definitions in Fig.~\ref{fig:toy-2pc}.

We will consider \textit{state-based} and \textit{action-based} behaviors in this paper.
State-based and action-based behaviors are sequences of states and actions respectively, where a state is an assignment to variables and an action is an \textit{event formula} as described above.
We use subscripts to denote the $i$th state or action of a behavior.
If $\tau$ is a state-based behavior and $i \geq 0$, then $\tau_i \models P$ if and only if the formula $P$ is valid when replacing each (non-primed) variable in $P$ with the value assigned to by $\tau_i$.
If $\sigma$ is an action-based behavior, $i \geq 0$, and $s$ and $t$ are states, then $(s,t) \models \sigma_i$ if and only if the event formula $\sigma_i$ is valid when replacing each non-primed variable in $\sigma_i$ with the values assigned to by $s$, and each primed variable with the value assigned to by $t$.

A behavior satisfies a PSTS if and only if it satisfies all of its \textit{instances}.
An instance of a PSTS is a symbolic transition system $(vars,Init,Next)^{\hat{p}}$, where $\hat{p}$ is a function that maps each parameter to a subset of its domain.
For example, $\hat{p} = [RMs \mapsto \{rm_1,rm_2\}]$ corresponds to an instance for Toy-2PC.
Given a state-based behavior $\tau$ and a symbolic transition system $S = (vars,Init,Next)^{\hat{p}}$, $\tau \models S$ if and only if $\tau_0 \models Init$ and, for each $i \geq 0$, $(\tau_i,\tau_{i+1}) \models Next$.
Given an action-based behavior $\sigma$ and a symbolic transition system $S$, $\sigma \models S$ if and only if there exists a state-based behavior $\tau$ such that $\tau \models S$ and, for each $i \geq 0$, $(\tau_i,\tau_{i+1}) \models \sigma_i$.
We will use the notation $\mathcal{L}(S)$ to denote the language of $S$, which is the set of all action-based behaviors that satisfy $S$.
In a given state $s$, we say that an action $a$ is \textit{enabled} if there exists a state $t$ such that $(s,t) \models a$.
We call $S$ \textit{deterministic} if, for all states $s$, $t_1$, and $t_2$ and any action $a$, it is the case that $(s,t_1) \models a$ and $(s,t_2) \models a$ implies $t_1 = t_2$.

Let $S_i = (vars_i,Init_i,Next_i)^p$ for $i \in \{1,2\}$, where $vars_1 \cap vars_2 = \emptyset$.
We define parallel composition over PSTSs as follows: $S_1 \parallel S_2 \triangleq (vars_1 \cup vars_2, Init_1 \land Init_2, Next)^p$, where the symbolic transition relation $Next$ is defined as:
\vspace{1.8mm}
\begin{align*}
    &\bigvee_{a \in \act S} \left\{ \tlastate{
        \exists d \in D \st S_1!a \ \land \ S_2!a\quad \hfill \text{if}\ a \in \act S_1 \ \text{and} \ a \in \act S_2 \\
        \exists d \in D \st S_1!a \ \land \ S_2!vars' = S_2!vars\quad \hfill \text{if}\ a \in \act S_1 \ \text{and} \ a \notin \act S_2 \\
        \exists d \in D \st S_1!vars' = S_1!vars \ \land S_2!a\quad \hfill \text{if}\ a \notin \act S_1 \ \text{and} \ a \in \act S_2
    } \right.
\end{align*}
\vspace{0.7mm}

\noindent Here, $\exists d \in D$ quantifies over the parameters associated with each action $a$ ($d$ appears syntactically in $a$), the operator $!$ indicates the scope of the definition for a formula, and $\act S = (\act S_1) \cup (\act S_2)$.
Essentially, the symbolic transition relation synchronizes actions that are shared between components and interleaves unshared actions.
We remark that parallel composition is only defined for PSTSs that have the same parameters $p$ but do not share any state variables.
As an example, consider the decomposition of the Toy-2PC specification shown in Fig.~\ref{fig:toy-2pc-components}.
In this case, we have $Toy2PC = ToyRM \parallel ToyTM$.
We refer to $ToyRM$ and $ToyTM$ as the \textit{components} of the Toy-2PC system.
\begin{figure}
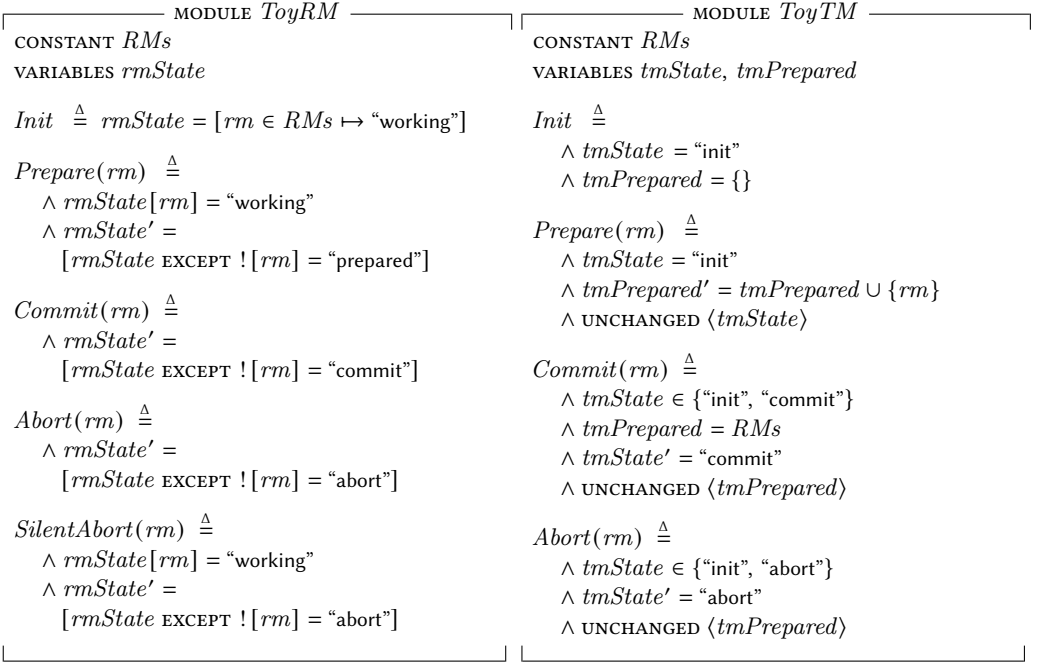

    \begin{subfigure}[t]{0.49\textwidth}
        {\small \input{tla/toy_2pc_RM}}
    \end{subfigure}
    \begin{subfigure}[t]{0.49\textwidth}
        {\small \input{tla/toy_2pc_TM}}
    \end{subfigure}
    \caption{Toy-2PC decomposed into two components, $ToyRM$ and $ToyTM$.}
    \label{fig:toy-2pc-components}
\end{figure}

\paragraph{The Inductive Invariant Proof Method}
Because PSTS are families of transition systems, typical model checking techniques, such as explicit-state enumeration and symbolic model checking, are generally insufficient for verification.
Instead, verification is usually performed using the inductive invariant proof method.
This method proves that a PSTS $(vars,Init,Next)^p$ satisfies an invariant $\Box P$ by finding a formula $Inv$, called an \textit{inductive invariant}, that satisfies the following three equations:
\begin{align}
    &Init \implies Inv \label{eqn:initiation}\\
    &Inv \land Next \implies Inv' \label{eqn:consecution}\\
    &Inv \implies P \label{eqn:safety}
\end{align}
Equation (\ref{eqn:initiation}) is called \textit{initiation} and (\ref{eqn:consecution}) is called \textit{consecution}; together, these conditions show that $Inv$ is an overapproximation of the reachable states of the transition system.
On the other hand, equation (\ref{eqn:safety}) shows that $Inv$ is an underapproximation of the invariant $\Box P$.
Together, the three conditions above imply that the transition system is safe.
The problem of \textit{inductive invariant inference} is the task of finding an inductive invariant that can be used to verify a system.
We refer to any technique that finds $Inv$ directly--without decomposing the system--as a \textit{global inductive invariant inference} technique.

\section{State-Based Assume-Guarantee Theory}
\label{sec:state-theory}
In this section, we present our \textit{state-based} assume-guarantee theory that occupies the bottom of our two-layered theory.
We begin in Sec.~\ref{sec:contracts} by defining state-based contracts, as well as the local inductive invariant proof technique for proving that a state-based contract is fulfilled.
Subsequently, in Sec.~\ref{sec:compose-contracts}, we present inference rules for composing state-based contracts and their local inductive invariants.
Although the state-based theory is not intended for composition, we use concepts from the action-based theory (e.g., parallel composition over components) to create composition rules.
Ultimately, these composition rules are not intended to be used directly for verification, but they will simplify our definitions for the composition rules in the action-based theory.
Finally, we dedicate Appendix~\ref{sec:theory-soundness} to proving that these composition rules are sound.

\subsection{State-Based Contracts and Local Inductive Invariants}
\label{sec:contracts}
We will use the notation $\ag{A} C \ag{G}$ for contracts in our state-based theory, where $G$ is the guarantee that the component $C$ must satisfy under the assumption $A$.
We base contracts on \textit{invariance}, meaning that the component may assume $A$ at all time steps, but must also guarantee $G$ at all time steps.
Therefore, the proof obligation of a contract is to satisfy the invariant $G$, while only considering those states of $C$ that satisfy $A$.
We capture this formally in the following definition.
\begin{definition}
    \label{def:ag-contract}
    Let $C = (vars,Init,Next)^p$ be a PSTS, and let $A$ and $G$ be non-temporal FOL formulas such that $SV(A) \cup SV(G) \subseteq vars$ and $Params(A) \cup Params(G) \subseteq p$.
    A contract $\ag{A} C \ag{G}$ is \textit{fulfilled} if and only if $(vars, Init \land A, Next \land A')^p \models \Box G$.
    We will denote contract fulfillment with the notation $\vdash \ag{A} C \ag{G}$.
\end{definition}
As an example, consider the $ToyRM$ component in the Toy-2PC protocol, shown in Fig.~\ref{fig:toy-2pc-components}.
If we assume--strictly for the sake of demonstration--that no resource manager will ever abort, then it is possible to prove that $ToyRM$ guarantees the safety property $Consistent$, shown in  Fig.~\ref{fig:toy-2pc-consistent}.
If we let {\tlatex \@xx{NoAbort = \A\, r \.{\in} RMs \.{:} rmState[r] \.{\neq} \@w{abort}}}, then $\vdash \ag{NoAbort} ToyRM \ag{Consistent}$.

Unfortunately, conventional model-checking techniques cannot be used to show that the above contract is fulfilled because $ToyRM$ is parameterized with an infinite domain.
Therefore, we turn our attention to a more powerful technique for showing contract fulfillment called the \textit{local inductive invariant} proof method.
In this proof method, one must find a formula $I$ that is an inductive invariant for the component under an assumption $A$, meaning that $A$ can be assumed to hold at all time steps.
We will use the notation $I \vdash \ag{A} C \ag{G}$ to mean that $I$ is a local inductive invariant that proves that the corresponding contract is fulfilled.
\begin{definition}
    \label{def:local-ind-inv}
    Let $C = (vars,Init,Next)^p$ be a PSTS, $\ag{A} C \ag{G}$ be a contract, and $I$ be a non-temporal FOL formula such that $SV(I) \subseteq vars$ and $Params(I) \subseteq p$.
    Then we write $I \vdash \ag{A} C \ag{G}$ to mean that the following three equations are valid:
    \begin{align}
        &Init \land A \implies I \\
        &I \land Next \land A \land A' \implies I' \\
        &I \implies G
    \end{align}
\end{definition}
In the Toy-2PC example, $NoAbort$ serves as a local inductive invariant to prove the contract from above, so we additionally write $NoAbort \vdash \ag{NoAbort} ToyRM \ag{Consistent}$.
Finally, we conclude this subsection by showing that finding a local inductive invariant is a sufficient condition for contract fulfillment; we provide a proof for this theorem in Appendix~\ref{apx:ii-implies-safe}.
\begin{theorem}
    \label{thm:ii-implies-safe}
    $I \vdash \ag{A} C \ag{G}$ implies $\vdash \ag{A} C \ag{G}$.
\end{theorem}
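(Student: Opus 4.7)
The plan is to unfold the two definitions and perform a straightforward induction on the length of behaviors. Fix an arbitrary instance $\hat{C} = (vars, Init \land A, Next \land A')^{\hat{p}}$ of the modified PSTS on the right-hand side of Definition~\ref{def:ag-contract}, and an arbitrary state-based behavior $\tau \models \hat{C}$. The goal is to show that $\tau \models \Box G$, i.e., $\tau_i \models G$ for every $i \geq 0$. Since $\hat{C}$ and $\tau$ are arbitrary, this will give $\vdash \ag{A} C \ag{G}$.

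First I would establish a preliminary claim: for every $i \geq 0$, $\tau_i \models A$. This is essentially immediate from the structure of $\hat{C}$: the initial-state predicate is $Init \land A$, so $\tau_0 \models A$; and the transition relation is $Next \land A'$, so for each $i \geq 0$ the successor state $\tau_{i+1}$ satisfies $A$. This observation is what makes the hypothesis $A \land A'$ appearing in the consecution clause of Definition~\ref{def:local-ind-inv} actually usable along behaviors of $\hat{C}$.

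Next I would prove by induction on $i$ that $\tau_i \models I$. For the base case $i=0$, we have $\tau_0 \models Init \land A$ by definition of $\hat{C}$, so the first clause of Definition~\ref{def:local-ind-inv} gives $\tau_0 \models I$. For the inductive step, assume $\tau_i \models I$. By the preliminary claim, $\tau_i \models A$ and $\tau_{i+1} \models A$, so the pair $(\tau_i, \tau_{i+1})$ satisfies $I \land Next \land A \land A'$; the second clause of Definition~\ref{def:local-ind-inv} then yields $\tau_{i+1} \models I$. Finally, the third clause gives $\tau_i \models G$ for every $i$, which is exactly $\tau \models \Box G$, completing the argument.

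I do not anticipate a real obstacle here: the theorem is the standard soundness-of-induction statement specialized to the assumption-strengthened PSTS, and the only subtlety worth handling carefully is making sure the induction is stated over behaviors of the modified system $(vars, Init \land A, Next \land A')^{\hat{p}}$ rather than the original $C$, which is precisely what forces $A$ to hold at every step and licenses the use of the consecution clause. A secondary bookkeeping point is quantifying correctly over instances $\hat{p}$, but since $I$, $A$, $G$ all have parameters drawn from $p$, the induction goes through uniformly for any instantiation, so no additional argument is needed.
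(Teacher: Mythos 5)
Your proof is correct and follows essentially the same route as the paper's: the paper invokes the standard inductive invariant proof method on the $A$-strengthened system and notes that $(vars, Init \land A, Next \land A \land A')^p$ and $(vars, Init \land A, Next \land A')^p$ have the same behaviors, which is exactly the content of your preliminary claim that $A$ holds at every state along any behavior. You simply unfold into an explicit induction what the paper compresses into that appeal, so there is no substantive difference.
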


\subsection{Composing Contracts and Local Inductive Invariants}
\label{sec:compose-contracts}
In this section, we present inference rules for composing state-based contracts.
On their own, these rules can be challenging to use for verifying large systems.
Instead, these inference rules are intended to lay the groundwork for composing action-based contracts in Sec.~\ref{sec:action-theory}.
In particular, we will leverage the soundness theorems for these rules to prove the soundness of the action-based composition rules.
At the end of this subsection, we provide a discussion with the explicit reason the rules are tough to use on their own.

The inference rules for composition in this paper are based on \textit{transitivity} of invariance.
This style of contract composition resembles that of the popular and successful assume-guarantee verification theory for finite-state systems, originally introduced by Cobleigh et al. \cite{Cobleigh:2003}.
We will exclusively focus on transitive-style contract composition; however, in the future we plan to extend our theory with additional types of composition, e.g., circular composition \cite{Elkader:2018,Graf:1990,McMillan:1999,Kedar:2010} and conjunctive styles of composition as seen in the Owicki-Gries method \cite{Owicki:1976} and Concurrent Separation Logic (called the disjoint concurrency rule) \cite{OHEARN:2007}.

The idea behind our transitive composition method is to find an invariant $R$ that serves as the guarantee for one contract and the assumption of another, e.g., as seen in the rule \textsc{naive-comp} in Fig.~\ref{fig:composition-rules}.
The formula $R$ is often referred to as an \textit{assumption}, however we will refer to $R$ as a \textit{bridge formula} (or simply a \textit{bridge}), since we already reserve the term \textit{assumption} for the assumption of a contract.
In addition to composing contracts, we also provide inference rules for composing the associated local inductive invariants.
The \textsc{naive-comp} rule, for example, composes two contracts as well as their associated local inductive invariants $I_1$ and $I_2$.
The composed invariants--that is, the formula $I_1 \land I_2$--becomes a local inductive invariant for the composed contract.

Unfortunately, as the name suggests, \textsc{naive-comp} is a naive rule that cannot be used to prove fulfillment for many valid contracts.
The problem with \textsc{naive-comp} is that the components do not share state variables, and therefore the formula $R$ cannot refer to \textit{any} state variables (by Def.~\ref{def:ag-contract}).
In practice, this is a problem because local correctness arguments often rely on assumptions about the correct behaviors of their state variables.
For example, in Toy-2PC, \textsc{naive-comp} cannot be used to show that the $ToyRM$ components satisfies $Consistent$.
The reason is because $ToyRM$ only guarantees $Consistent$ if it can assume that the $ToyTM$ component operates correctly; i.e. the $ToyTM$ must only choose to commit if all resource managers have prepared, and can never issue both a commit and an abort message.
Unfortunately, this assumption cannot be expressed in terms of $ToyRM$'s state variables.
\begin{figure}
    \centering
    \begin{align*}
        &\inferrule*[lab=naive-comp]{I_1 \vdash \ag{A} C_1 \ag{R} \\ I_2 \vdash \ag{R} C_2 \ag{G}}{I_1 \land I_2 \vdash \ag{A} C_1 \parallel C_2 \ag{G}}
        \quad
        \inferrule*[lab=bridge-comp]{I_1 \vdash \ag{A} C_1 \parallel B \ag{R} \\ I_2 \vdash \ag{R} B \parallel C_2 \ag{G}}{I_1 \land I_2 \vdash \ag{A} C_1 \parallel B \parallel C_2 \ag{G}}
    \end{align*}
    \begin{align*}
        &\inferrule*[lab=aux-comp]{I_1 \vdash \ag{A} C_1 \parallel B \ag{R} \\ I_2 \vdash \ag{R} B \parallel C_2 \ag{G} \\ Aux\ B}{\vdash \ag{A} C_1 \parallel C_2 \ag{G}}
    \end{align*}
    \caption{Inference rules for composing contracts and inductive invariants in the state-based theory.}
    \label{fig:composition-rules}
\end{figure}

To address the issue identified above, we propose a \textit{bridge component} that acts as an intermediary between the two components.
The key idea--shown by rule \textsc{bridge-comp} in Fig.~\ref{fig:composition-rules}--is that the components share the bridge component $B$, so it becomes possible to write a bridge assumption $R$ over the ``shared'' state variables in $B$.
While the bridge component can be an existing component of the system, we propose creating bridge components exclusively with \textit{auxiliary variables}.
Auxiliary variables were originally proposed by Owicki and Gries to make their inference system complete \cite{Owicki:1976aux}.
Similarly, auxiliary variables make our theory more widely applicable, though not necessarily complete.
We lift the notion of auxiliary variables to \textit{auxiliary components} in the definition below.
\begin{definition}
    \label{def:aux-component}
    A component $B$ is an \textit{auxiliary component} if, for any contract $\ag{A} C \ag{G}$, we have $\vdash \ag{A} C \parallel B \ag{G}$ implies $\vdash \ag{A} C \ag{G}$.
    When $B$ is an auxiliary component we write $Aux\ B$.
\end{definition}
In the case that $Aux\ B$, we will refer to a pair $(B,R)$ that is used with the \textsc{bridge-comp} rule as a \textit{bridge pair}.
Using an auxiliary bridge component with the \textsc{bridge-comp} rule is also sound, in the sense that it proves $\vdash \ag{A} C_1 \parallel C_2 \ag{G}$.
This fact follows from the \textsc{aux-comp} inference rule in Fig.~\ref{fig:composition-rules}.
Notice that, in general, the inductive invariant $I_1 \land I_2$ in the conclusion of the \textsc{bridge-comp} rule will contain auxiliary variables from the bridge component.
Therefore, we include \textsc{aux-comp} as a separate inference rule because $I_1 \land I_2$ may not be well-defined over $C_1 \parallel C_2$.
Despite this technicality, the formula $I_1 \land I_2$ is nevertheless an inductive invariant that proves $\vdash \ag{A} C_1 \parallel C_2 \ag{G}$ when using an auxiliary bridge component.

\paragraph{Toy-2PC example.}
\begin{figure}
    \begin{subfigure}[t]{0.95\textwidth}
        {\tlatex \@x{}\moduleLeftDash\@xx{ {\MODULE} ToyB}\moduleRightDash\@xx{}}
    \end{subfigure}
    {\tlatex \@pvspace{2.0pt}}
    \begin{subfigure}[t]{0.46\textwidth}
        {\small \input{tla/toy_2pc_B_left}}
    \end{subfigure}
    \begin{subfigure}[t]{0.47\textwidth}
        {\small \input{tla/toy_2pc_B_right}}
    \end{subfigure}
    \begin{subfigure}[t]{0.95\textwidth}
        {\tlatex \@pvspace{6.0pt}}
        {\tlatex \@x{}\bottombar\@xx{}}
    \end{subfigure}
    \caption{A bridge component for Toy-2PC.}
    \label{fig:toy-2pc-bridge}
\end{figure}
\begin{figure}
    \begin{subfigure}[t]{0.80\textwidth}
        {\tlatex \input{tla/toy_2pc_R}}
    \end{subfigure}
    \caption{A bridge formula and local inductive invariants for the components of Toy-2PC.}
    \label{fig:toy-2pc-defs}
\end{figure}
Using our state-based theory, we will compositionally infer an inductive invariant for the Toy-2PC example. 
We begin by creating a bridge component $ToyB$ (Fig.~\ref{fig:toy-2pc-bridge}) with three auxiliary variables $oncePrepare$, $onceCommit$, and $onceAbort$.
These variables respectively represent whether a $Prepare$, $Commit$, or $Abort$ action has happened at least once in the past.
We define the bridge formula $ToyR$ (Fig.~\ref{fig:toy-2pc-defs}) over the auxiliary variables in $ToyB$.
This bridge formula formally encodes our earlier intuition that, in order to verify consistency, $ToyRM$ must assume that the $ToyTM$ will only choose to commit if all resource managers have prepared, and also that $ToyTM$ will never issue both a commit and an abort message.

Next, we create a local inductive invariant $I_{RM}$ (Fig.~\ref{fig:toy-2pc-defs}) for $ToyRM \parallel ToyB$, i.e. we have $I_{RM} \vdash \ag{ToyR} ToyRM \parallel ToyB \ag{Consistent}$.
Similarly, we create a local inductive invariant $I_{TM}$ (Fig.~\ref{fig:toy-2pc-defs}) for $ToyB \parallel ToyTM$, i.e. $I_{TM} \vdash \ag{\True} ToyB \parallel ToyTM \ag{ToyR}$.
The rule \textsc{aux-comp} allows us to conclude that the entire protocol (without assumptions) is safe, since $Toy2PC = ToyRM \parallel ToyTM$.
Furthermore, the inference rule \textsc{bridge-comp} allows us to infer that $I_{RM} \land I_{TM}$ is an inductive invariant for $ToyRM \parallel ToyB \parallel ToyTM$.
Because $ToyB$ is an auxiliary component, $I_{RM} \land I_{TM}$ is an inductive invariant \textit{for the entire system}.

\paragraph{Purpose of the state-based composition rules.}
The inference rules from Fig.~\ref{fig:composition-rules} can be used to show system-wide correctness, but require inventing a bridge component.
Unfortunately, bridge components can be complex--especially for large systems--which can ultimately cause these composition rules to be difficult to use in practice.
In Sec.~\ref{sec:action-theory}, we will solve this problem by introducing simpler composition rules in the action-based theory that do not require inventing a bridge component.
Therefore, the purpose of the composition inference rules from Fig.~\ref{fig:composition-rules} is to build upon them in the action-based theory.

\section{Action-Based Assume-Guarantee Theory}
\label{sec:action-theory}
In this section, we present the action-based theory that sits at the top of our two-layered assume-guarantee theory.
The action-based theory has inference rules for composition that are well suited for verification, which we will use in our compositional framework in Sec.~\ref{sec:ii-infer}.
In particular, the inference rules for the action-based theory do not require inventing a bridge component.
In the state-based theory, bridge components were a necessary intermediary for bridge formulas, since components do not share state variables.
However, components share actions, which is the basis for the simplicity of composition in the action-based theory.
In the action-based theory, assumptions, guarantees, and bridge formulas are encoded as \textit{action invariants}, which are formulas with action-based semantics.
Bridge formulas are required to reference the actions from the shared alphabet of components whose contracts are being composed, which makes an intermediary, such as a bridge component, unnecessary.

The advantages of using action-based formalisms for composition are well-known, which is the reason that assume-guarantee verification frameworks often use labeled transition systems for assumptions, guarantees, and bridges \cite{alur:2005,Chen:2009,Cobleigh:2003,Gupta:2007,Nam:2008}.
However, using action invariants in our contracts has the unique advantage that it allows our contracts to be translated to state-based contracts, which can then be proved with the local inductive invariant method.
Allowing our contracts to be proved via local inductive invariants is a key advantage because it allows us to compositionally verify parameterized systems, which the assume-guarantee theories for labeled transition systems cannot, in general, be used for.

We will begin in Sec.~\ref{sec:sfl-lang} by introducing \flogic{}, an action-based logic language that we will use to define action invariants.
In Sec.~\ref{sec:sfl-translation}, we show how to translate action invariants to state-based formalisms.
In Sec.~\ref{sec:ag-theory-sfl}, we introduce our action-based assume-guarantee theory, in which we define contracts, local inductive invariants for action-based contracts, and inference rules for composing action-based contracts.
Finally, in Sec.~\ref{sec:ag-hybrid}, we introduce a second type of action-based contract we call a \textit{hybrid contract} that allows the action-based theory to verify state-based properties.

\subsection{Symbolic Fluent LTL}
\label{sec:sfl-lang}
We now introduce a novel logic language called \textit{Symbolic Fluent Linear Temporal Logic} (\flogic{}).
\flogic{} is based on Fluent Linear Temporal Logic (FLTL) \cite{Giannakopoulou:2003}, a logic language for specifying temporal properties over \textit{actions}.
\flogic{} provides two extensions to FLTL that make it particularly well suited for specifying assumptions and guarantees for parameterized systems: \textit{symbolic fluents} and quantifiers.
We will describe these extensions along with the syntax and semantics for the new logic language.
At the end of this subsection, we will use the \flogic{} language to identify a class of formulas called \textit{action invariants} that we will use to define action-based contracts in Sec.~\ref{sec:ag-theory-sfl}.

\paragraph{\flogic{} syntax.}
\begin{figure}
    \begin{subfigure}[t]{0.31\textwidth}
        \vspace{2mm}
        {\small \input{tla/toy_2pc_oncePrepare}}
        \caption{Definition for $oncePrepareT$.}
        \label{fig:toy-2pc-oncePrepare}
    \end{subfigure}
    \hfill
    \begin{subfigure}[t]{0.31\textwidth}
        \vspace{2mm}
        {\small \input{tla/toy_2pc_onceCommit}}
        \caption{Definition for $onceCommitT$.}
        \label{fig:toy-2pc-onceCommit}
    \end{subfigure}
    \hfill
    \begin{subfigure}[t]{0.31\textwidth}
        \vspace{2mm}
        {\small \input{tla/toy_2pc_onceAbort}}
        \caption{Definition for $onceAbortT$.}
        \label{fig:toy-2pc-onceAbort}
    \end{subfigure}
    \begin{subfigure}[t]{0.46\textwidth}
        \begin{align*}
            oncePrepare &= \left(\langle RMs \rangle, x, oncePrepareT \right) \\
            onceCommit &= \left(\langle RMs \rangle, y, onceCommitT \right) \\
            onceAbort &= \left(\langle RMs \rangle, z, onceAbortT \right)
        \end{align*}
        \caption{Fluent definitions for $\rho_1$ and $\rho_2$.}
        \label{fig:toy-2pc-fluents}
    \end{subfigure}
    \hfill
    \begin{subfigure}[t]{0.46\textwidth}
        \begin{align*}
            \sigma^1 &= \big(Prepare(r_1)\big) \\
            \sigma^2 &= \big(Prepare(r_1), Prepare(r_2), Commit(r_2)\big)
        \end{align*}
        \caption{Example finite behaviors for the instance $RMs = \{r_1,r_2\}$.}
        \label{fig:toy-2pc-traces}
    \end{subfigure}
    \begin{subfigure}[t]{0.99\textwidth}
        \begin{align*}
            &\rho_1 \triangleq (\exists r \in RMs : onceCommit(r)) \implies (\forall r \in RMs : oncePrepare(r))\\
            &\rho_2 \triangleq (\exists r \in RMs : onceAbort(r)) \implies (\forall r \in RMs : \neg onceCommit(r))
        \end{align*}
        \caption{Example \flogic{} formulas.}
        \label{fig:toy-2pc-sfl-formula}
    \end{subfigure}
    \caption{Definitions for specifying and verifying Toy-2PC compositionally using \flogic{}.}
\end{figure}
Symbolic fluents are an extension of the non-symbolic fluents from FLTL.
Non-symbolic fluents are action-based propositions in FLTL that are syntactically analogous to the state-based atomic propositions of Linear Temporal Logic (LTL) \cite{Pnueli:1977}.
Semantically, however, fluents represent booleans that may turn on or off depending on each action taken.
The set of non-symbolic fluents is fixed in each FLTL formula, which makes the language well-suited for specifying properties of systems that have a fixed alphabet, e.g., labeled transition systems.
In a parameterized system, however, the alphabet generally depends on the parameters, which makes FLTL ill-suited for specifying properties of such systems.
This motivates \textit{symbolic fluents} that accept arguments and semantically correspond to boolean-valued functions.
Note that we specifically use the term \textit{argument} for symbolic fluents to differentiate from the \textit{parameters} of a protocol.
For the remainder of this paper, we will use the terms \textit{fluent} and \textit{symbolic fluent} synonymously.
The formal definition of a fluent is as follows.
\begin{definition}
    \label{def:sym-fluent}
    A symbolic fluent is a triple $(s,v,T)$, where $s$ is a tuple of parameters that specify the type of each argument to the fluent, $v$ is a state variable, and $T$ is a PSTS with the following four requirements:
    \begin{enumerate}
        \item $T$ has exactly one state variable $v$.
        \item $T$ is deterministic.
        \item The type of $v$ must always be a boolean-valued function whose arguments have the types given by $s$.
        \item Every action in $\act T$ is enabled in every state of $T$.
    \end{enumerate}
    We also define the following notation for the alphabet of a symbolic fluent: $\act(s,v,T) = \act T$.
\end{definition}
Returning to the Toy-2PC example, consider the fluent $oncePrepare$ in Fig.~\ref{fig:toy-2pc-fluents}.
The fluent accepts one argument with type $RMs$ and refers to a state variable $x$ whose value changes according to the transition system $oncePrepareT$ from Fig.~\ref{fig:toy-2pc-oncePrepare}.
Intuitively, the fluent represents whether a $Prepare(rm)$ action has occurred at least once in the past, which can be seen from the definition of $oncePrepareT$.
We will use the syntax $oncePrepare(rm)$ to access the value of the fluent at a given time step, where $rm$ is an argument to the fluent.
For example, after execution of the finite behavior $\sigma^1$ from Fig.~\ref{fig:toy-2pc-traces}, $oncePrepare(r_1)$ will evaluate to $\True$, while $oncePrepare(r_2)$ will evaluate to $\False$.
Because fluents in \flogic{} accept arguments, we can introduce quantifiers into the language for the purpose of passing quantified variables as arguments to fluents.
Based on these extensions, we now define the syntax for \flogic{} formulas.
\begin{definition}
    A Symbolic Fluent Linear Temporal Logic (\flogic{}) formula has the syntax in Fig.~\ref{fig:sfl-syntax}, must be absent of free variables (arguments to fluents), and fluents may not share state variables.
    For an \flogic{} formula $\phi$, $fl(\phi)$ refers to the set of all symbolic fluents in $\phi$ and $Params(\phi)$ refers to the set of all parameters (each $D$ from Fig.~\ref{fig:sfl-syntax}) that are quantified over in $\phi$.
    In this paper, we will use Greek letters, namely $\phi,\psi,\alpha,\rho,$ and $\gamma$, to represent \flogic{} formulas.
\end{definition}
\begin{remark}
    We include standard logical connectives (e.g., $\land$ and $\Longrightarrow$) and temporal logic operators (e.g., $\Box$ and $\lozenge$) in \flogic{}, defined with the usual syntactic sugar.
    For example, the ``always'' operator can be defined in \flogic{} as $\Box \phi = \neg(\True\ \textbf{U} \neg\phi)$.
\end{remark}
\begin{figure}
    \centering
    $$\flogic{} ::= \ \flogic{}\ \textbf{U}\ \flogic{} \mid \textbf{X}\ \flogic{} \mid \exists x \in D : \flogic{} \mid \ \neg\flogic{} \mid \flogic{}\lor\flogic{} \mid f(ARG)$$
    $$ARG ::= \ x \mid x , ARG$$
    \caption{BNF grammar for the \flogic{} Syntax.
        \textbf{U} and \textbf{X} are temporal operators, while $\exists$, $\neg$, and $\lor$ are first-order logic connectives.
        $f(ARG)$ indicates a symbolic fluent $f$ with one or more parameters.
        In $ARG$, $x$ represents a quantified variable passed as an argument to a symbolic fluent.}
    \label{fig:sfl-syntax}
\end{figure}
For example, consider the \flogic{} formula $\Box\rho_1$, where $\rho_1$ is defined in Fig.~\ref{fig:toy-2pc-sfl-formula}.
This formula is an invariant that quantifies over $RMs$ and has two symbolic fluents, $oncePrepare$ and $onceCommit$, that are defined in Fig.~\ref{fig:toy-2pc-fluents}.
The $onceCommit$ fluent is defined similarly to the $oncePrepare$ fluent described above, the main difference being that $onceCommit$ represents whether a $Commit(rm)$ action has occurred at least once (see the definition for $onceCommitT$ in Fig.~\ref{fig:toy-2pc-onceCommit}).
Intuitively, the formula $\rho_1$ specifies that at all times, if any resource manager commits, then it must be the case that all resource managers have previously prepared.

\paragraph{\flogic{} semantics.}
\begin{figure}
    \centering
    \begin{align*}
        &E \vdash \sigma,i \models \phi\textbf{U}\psi && \text{iff} \quad \exists j \in \mathbb{N} : (j \geq i) \land (E \vdash \sigma,j \models \psi) \ \land \\
        & && \quad\quad (\forall k \in \mathbb{N} : i \leq k < j \implies E \vdash \sigma,k \models \phi) \\
        &E \vdash \sigma,i \models \textbf{X}\phi && \text{iff} \quad E \vdash \sigma,i+1 \models \phi \\
        &E \vdash \sigma,i \models \exists x \in D : \phi && \text{iff} \quad \exists y \in D : E[x \mapsto y] \vdash \sigma,i \models \phi \\
        &E \vdash \sigma,i \models \neg \phi && \text{iff} \quad \neg (E \vdash \sigma,i \models \phi) \\
        &E \vdash \sigma,i \models \phi \lor \psi && \text{iff} \quad (E \vdash \sigma,i \models \phi) \lor (E \vdash \sigma,i \models \psi) \\
        &E \vdash \sigma,i \models f(r) && \text{iff} \quad (f|\sigma_0\dots\sigma_i)[r[E]]
    \end{align*}
    \caption{\flogic{} semantics.}
    \label{fig:sfl-semantics}
\end{figure}
We now turn our attention to defining the formal semantics of \flogic{}.
We use a ternary operator of the form $E \vdash \sigma,i \models \phi$ to express that a (possibly infinite) action-based behavior $\sigma$ satisfies the \flogic{} formula $\phi$ at index $i$ with environment $E$.
An environment is an assignment of variables to values, and we use the notation $r[E]$ to mean the result of replacing all variables in $r$ with their corresponding value in $E$.
Additionally, we use the short hand $\sigma \models \phi$ to mean $[] \vdash \sigma,0 \models \phi$, where $[]$ is the empty environment.
We also define the language of an \flogic{} formula $\phi$ as the set of all its satisfying behaviors: $\mathcal{L}(\phi) = \{ \sigma \mid \sigma \models \phi \}$.

We present the formal semantics for \flogic{} in Fig.~\ref{fig:sfl-semantics}.
The semantics for \flogic{} are standard, with the exception of symbolic fluent expressions which we now describe.
Intuitively, the meaning of a symbolic fluent $(s,v,T)$ is the boolean-valued function $v$ after the execution of a finite behavior $\sigma$ of actions in the transition system $T$.
We now provide a notation for accessing the value of a fluent at a given time step.
\begin{definition}
    \label{def:fluent-bar}
    Given a fluent $f = (s,v,T)$ and a finite action-based behavior $\sigma = \sigma_0\sigma_1 \dots \sigma_i$, we let $f|\sigma$ denote the boolean-valued function $v$ that is the result of executing $\sigma$ in the transition system $T$.
    Note that any action in $\sigma$ that is not in $\act T$ leaves the state variable $v$ unchanged.
    We remark that this notation is well-defined because $\sigma$ is finite, $T$ is deterministic, and all actions in $T$ are always enabled.
\end{definition}
Fig.~\ref{fig:sfl-semantics} uses the notation from Def.~\ref{def:fluent-bar} to define the semantics of a symbolic fluent expression $f(r)$, where $f$ is a fluent and $r$ are the arguments passed to the fluent.
In $(f|\sigma_0\dots\sigma_i)[r[E]]$ from the definition in Fig.~\ref{fig:sfl-semantics}, the arguments $r$ contain quantified variables, and therefore $r[E]$ are concrete arguments.
Essentially, the semantics of a fluent is the (boolean) value of the function $(f|\sigma_0\dots\sigma_i)$ given the concrete arguments $r[E]$.

For example, consider $oncePrepare$ from Fig.~\ref{fig:toy-2pc-fluents} as well as the finite behaviors from Fig.~\ref{fig:toy-2pc-traces}.
In this case, $oncePrepare|\sigma^1 = [r_1 \mapsto \True, r_2 \mapsto \False]$ and $onceCommit|\sigma^2 = [r_1 \mapsto \False, r_2 \mapsto \True]$.
Alternatively, we can also write $(oncePrepare|\sigma^1)[r_1] = \True$, $(oncePrepare|\sigma^1)[r_2] = \False$, etc. 
We also point out that we have both $\sigma^2 \models \Box\rho_1$ and $\sigma^2 \models \Box\rho_2$.

\paragraph{Action invariants.}
In the remainder of this paper, we will focus on a class of \flogic{} formulas called \textit{action invariants}.
An action invariant is a formula of the form $\Box\phi$, where $\phi$ is a non-temporal \flogic{} formula.
For example, $\Box \rho_1$ and $\Box \rho_2$ are both action invariants.
In Sec.~\ref{sec:sfl-translation}, we will show that action invariants can be translated to state-based formalisms.
Subsequently, in Sec.~\ref{sec:ag-theory-sfl}, we will use action invariants for the assumptions and guarantees of the contracts in our action-based assume-guarantee theory.


\subsection{Translating Action Invariants to State-Based Formalisms}
\label{sec:sfl-translation}
In this section, we show that action invariants have two state-based counterparts: bridge pairs and transition systems (PSTSs).
When we introduce our action-based assume-guarantee theory in the following section (Sec.~\ref{sec:ag-theory-sfl}), the translation from action invariants to bridge pairs will allow us to translate action-based contracts to state-based contracts. 
On the other hand, the translation from action invariants to transition systems lets us treat action invariants as algebraic processes, which will help us to define action-based contracts.

\paragraph{Translation to bridge pairs.}
We now show that action invariants correspond to bridge pairs, which are state-based formalisms that we introduced in Sec.~\ref{sec:compose-contracts}.
The translations are given by two mappings $\mathcal{B}$ (bridge components) and $\mathcal{R}$ (bridge formulas), which we now define.
\begin{definition}
    \label{def:b-r-maps}
    Let $\phi$ be an \flogic{} formula, then $\mathcal{B}(\phi) = \hspace{1.5mm} \parallel\hspace{-1mm}\{ T \mid (s,v,T) \in fl(\phi) \}$, where $fl(\phi)$ is the set of all fluents in the formula $\phi$ and the $\parallel$ operator here indicates the parallel composition of all transition systems in the given set.
    We provide the formal definition for $\mathcal{R}$ in Fig.~\ref{fig:sfl-f-map}.
\end{definition}
In essence, $\mathcal{B}$ is the parallel composition of all fluents in an \flogic{} formula, while $\mathcal{R}$ is a syntactic transformation that replaces each fluent with its underlying state variable.
These maps can be seen as a decoupling of an \flogic{} formula into two parts, where $\mathcal{B}$ represents the dynamics of the fluents and $\mathcal{R}$ represents the invariant itself.
\begin{figure}
    \centering
    \begin{align*}
        &\mathcal{R}(\phi\textbf{U}\psi) = \mathcal{R}(\phi)\textbf{U}\mathcal{R}(\psi) && \mathcal{R}(\textbf{X}\phi) = \textbf{X}\mathcal{R}(\phi) && \mathcal{R}(\exists x \in D : \phi) = \exists x \in D : \mathcal{R}(\phi) \\
        &\mathcal{R}(\neg \phi) = \neg\mathcal{R}(\phi) && \mathcal{R}(\phi \lor \psi) = \mathcal{R}(\phi) \lor \mathcal{R}(\psi) && \mathcal{R}\big((s,v,T)(r)\big) = v[r]
    \end{align*}
    \caption{Definition of $\mathcal{R}$, a syntactic transformation of an \flogic{} formula to a state-based one.}
    \label{fig:sfl-f-map}
\end{figure}

In the Toy-2PC example, we can translate the action invariant $\Box(\rho_1 \land \rho_2)$ (Fig.~\ref{fig:toy-2pc-sfl-formula}) to the corresponding state-based formalisms.
Up to a variable renaming, we have $\mathcal{B}(\Box(\rho_1 \land \rho_2)) = ToyB$ (Fig.~\ref{fig:toy-2pc-bridge}) and $\mathcal{R}(\Box(\rho_1 \land \rho_2)) = \Box ToyR$ (Fig.~\ref{fig:toy-2pc-defs}).
In this example, $\mathcal{B}$ maps an action invariant to an auxiliary component.
The following theorem shows that $\mathcal{B}$ will, in general, map action invariants to auxiliary components.
This result formally demonstrates that action invariants correspond to bridge pairs; we provide a proof in Appendix~\ref{apx:b-is-aux}.
\begin{theorem}
    \label{thm:b-is-aux}
    Let $\Box\phi$ be an action invariant, then $\mathcal{B}(\Box\phi)$ is an auxiliary component.
\end{theorem}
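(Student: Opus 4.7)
The plan is to unfold the definition of an auxiliary component and then show that running $B = \mathcal{B}(\Box\phi)$ in parallel with any component $C$ imposes no restriction on $C$'s behaviors. Concretely, given an arbitrary contract $\ag{A} C \ag{G}$ with $\vdash \ag{A} C \parallel B \ag{G}$, I will show $\vdash \ag{A} C \ag{G}$ by lifting every state-based behavior $\tau$ of $(vars_C, Init_C \land A, Next_C \land A')^p$ to a state-based behavior $\hat{\tau}$ of $(vars_{C\parallel B}, Init_{C\parallel B} \land A, Next_{C\parallel B} \land A')^p$ that agrees with $\tau$ on $vars_C$. Since $SV(A), SV(G) \subseteq vars_C$ by Def.~\ref{def:ag-contract}, the predicates $A$ and $G$ depend only on the $C$-coordinates of $\hat{\tau}$, so $\hat{\tau} \models \Box G$ (from the hypothesis) immediately yields $\tau \models \Box G$.

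The construction of $\hat{\tau}$ proceeds in two steps. First, by the definition of $\models$ for a PSTS, there exists an action-based behavior $\sigma$ witnessing $\tau$, i.e., such that $(\tau_i, \tau_{i+1}) \models \sigma_i$ for each $i$. Second, I simulate $B$ along $\sigma$: starting from any state $b_0 \models Init_B$, I inductively define $b_{i+1}$ by letting the fluent transition systems step on $\sigma_i$ when $\sigma_i \in \act T$, and leaving their variables unchanged otherwise, matching the parallel-composition definition from Sec.~\ref{sec:background}. This construction is well-defined precisely because of the three key properties guaranteed by Def.~\ref{def:sym-fluent} on each fluent $(s,v,T)$: every action in $\act T$ is always enabled (so the step never blocks), $T$ is deterministic (so $b_{i+1}$ is uniquely determined), and distinct fluents of $\phi$ use disjoint state variables (so their compositions pose no variable conflicts and $vars_B \cap vars_C = \emptyset$ as required for $C \parallel B$ to be defined). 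The resulting joint trajectory $\hat{\tau}_i = \tau_i \cup b_i$ satisfies $Init_{C\parallel B} = Init_C \land Init_B$ at index $0$, and each step $(\hat{\tau}_i, \hat{\tau}_{i+1})$ satisfies the appropriate disjunct of $Next_{C\parallel B}$ depending on whether $\sigma_i$ belongs to $\act C \cap \act B$, $\act C \setminus \act B$, or $\act B \setminus \act C$.

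The main obstacle will be the bookkeeping in the third case above: actions in $\act C$ that fall outside $\act B$ must leave $vars_B$ unchanged in $\hat{\tau}$, so I need to verify that my fluent-simulation step does exactly that, which follows from Def.~\ref{def:fluent-bar} (actions outside a fluent's alphabet leave its variable unchanged) combined with disjointness of variables across fluents. A secondary subtlety is handling actions in $\act B \setminus \act C$: these do not arise when lifting a $C$-behavior (since the witnessing $\sigma$ only uses $\act C$-actions), and the parallel-composition rule allows them to idle on $C$, so the converse direction --- projecting behaviors of $C \parallel B$ back to $C$ --- is also clean. Once the lift is established, the argument for $\vdash \ag{A} C \ag{G}$ is immediate: $\hat{\tau}_0 \models Init_C \land A$ gives $\hat{\tau}_0 \models Init_{C\parallel B} \land A$, each transition satisfies $Next_{C\parallel B} \land A'$ because $A$ only constrains the $C$-coordinates shared between $\tau$ and $\hat{\tau}$, and so the hypothesis $\vdash \ag{A} C \parallel B \ag{G}$ yields $\hat{\tau} \models \Box G$, from which $\tau \models \Box G$ follows.
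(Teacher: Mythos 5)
Your proposal is correct and follows essentially the same route as the paper's proof: the paper also reduces $\vdash \ag{A} C \parallel \mathcal{B}(\Box\phi) \ag{G}$ to $\vdash \ag{A} C \ag{G}$ by observing that every action of $\mathcal{B}(\Box\phi)$ is enabled in every state, so composing with it imposes no restriction on $C$'s behaviors. Your explicit lift of each restricted $C$-behavior to a behavior of $C \parallel \mathcal{B}(\Box\phi)$ (using always-enabledness, determinism, and variable disjointness) is just a spelled-out version of the paper's one-line claim that $\mathcal{B}(\Box\phi)$ ``allows all possible behaviors.''
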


\paragraph{Translation to a transition system.}
We now show that action invariants correspond to transition systems.
We provide this connection formally with the mapping $\mathcal{T}$, which we define in terms of the previously introduced maps $\mathcal{B}$ and $\mathcal{R}$.
\begin{definition}
    \label{def:sfl-st}
    Let $\Box\phi$ be an action invariant and also let $\mathcal{B}(\Box\phi) = (vars,Init,Next)^p$.
    Then, $\mathcal{T}(\Box\phi) = (vars, Init \land \mathcal{R}(\phi), Next \land \mathcal{R}(\phi)')^p$.
\end{definition}
Intuitively, $\mathcal{T}(\Box\phi)$ is a transition system version of the action invariant $\Box\phi$.
We will show formally that these two are semantically equivalent (Thm.~\ref{thm:sfl-aut-equiv}) after a brief example.
Returning to the Toy-2PC example, $\mathcal{T}(\Box(\rho_1 \land \rho_2))$ is a transition system that represents $ToyB$ restricted to the behaviors of $\Box ToyR$.
This intuition helps to see that $\mathcal{T}(\Box(\rho_1 \land \rho_2)) \parallel ToyRM \models \Box Consistent$ and also that $\mathcal{L}(ToyTM) \subseteq \mathcal{L}(\mathcal{T}(\Box(\rho_1 \land \rho_2)))$.
In this example, $\mathcal{T}(\Box(\rho_1 \land \rho_2))$ appears to take the place of a bridge formula, acting as an assumption for $ToyRM$ to satisfy $Consistent$ and as a guarantee for $ToyTM$.
Indeed, we will make this intuition precise in the following section (Sec.~\ref{sec:ag-theory-sfl}) by defining action-based contracts in terms of $\mathcal{T}$.

We now present the following theorem that shows that action invariants are semantically equivalent to their transition system counterpart given by $\mathcal{T}$.
We provide a proof for this theorem in Appendix~\ref{apx:proof-sfl-aut-equiv}.
\begin{theorem}
    \label{thm:sfl-aut-equiv}
    For any action invariant $\Box\phi$, we have $\mathcal{L}(\Box\phi) = \mathcal{L}(\mathcal{T}(\Box\phi))$.
\end{theorem}

\subsection{Assume-Guarantee Theory with Action-Based Contracts}
\label{sec:ag-theory-sfl}
We now present our action-based assume-guarantee theory.
We begin by defining contracts and the local inductive invariant proof method in this theory. 
Subsequently, we provide inference rules for composing contracts and their associated invariants.

\paragraph{Action-based contracts and local inductive invariants.}
We will use the notation $\sag{\alpha} C \sag{\gamma}$ for contracts in our action-based theory, where $\gamma$ is the guarantee that $C$ must satisfy under the assumption $\alpha$.
The meaning of a contract is that the language of the component $C$, when restricted to the assumption $\alpha$, must be contained in the language of the guarantee $\gamma$.
In typical assume-guarantee verification paradigms, restricting the component to the assumption is achieved algebraically, i.e. by taking the parallel composition of the assumption and the component.
Similarly, we will restrict the component to the assumption by taking the parallel composition of $C$ and $\mathcal{T}(\Box\alpha)$, the latter being semantically equivalent to the assumption by Thm.~\ref{thm:sfl-aut-equiv}.
We now define action-based contracts from this intuition.
\begin{definition}
    \label{def:sfl-ag-contract}
    Let $C = (vars,Init,Next)^p$ be a PSTS.
    Also let $\alpha$ and $\gamma$ be non-temporal \flogic{} formulas such that $Params(\alpha) \cup Params(\gamma) \subseteq p$ and $\act \alpha \cup \act \gamma \subseteq \act C$.
    Then a contract $\sag{\alpha} C \sag{\gamma}$ is \textit{fulfilled} if and only if $\mathcal{L}(\mathcal{T}(\Box\alpha) \parallel C) \subseteq \mathcal{L}(\Box\gamma)$.
    We will denote contract fulfillment with the notation $\vdash \sag{\alpha} C \sag{\gamma}$.
\end{definition}
For example, $\sag{\True} ToyTM \sag{\rho_1 \land \rho_2}$ is an action-based contract.
Proving that the contract is fulfilled, however, requires the local inductive invariant method.
We can use the local inductive invariant method by translating the contract to a state-based contract, which we accomplish by translating the assumption and guarantee using the maps $\mathcal{B}$ and $\mathcal{R}$ from the prior section (Sec.~\ref{sec:sfl-translation}).
For an action-based contract $\sag{\alpha} C \sag{\gamma}$, the corresponding state-based contract is $\ag{\mathcal{R}(\alpha)} \mathcal{B}(\Box\alpha) \parallel C \parallel \mathcal{B}(\Box\gamma) \ag{\mathcal{R}(\gamma)}$.
We will formally show that this choice of translation is appropriate below when we prove Thm.~\ref{thm:action-ii-sound}.
Based on this translation, we provide the following definition for showing that a contract is fulfilled by a local inductive invariant.
\begin{definition}
    \label{def:sfl-local-ii}
    Let $\alpha$ and $\gamma$ be non-temporal \flogic{} formulas.
    We define $I \vdash \sag{\alpha} C \sag{\gamma}$ to be fulfilled if and only if $I \vdash \ag{\mathcal{R}(\alpha)} \mathcal{B}(\Box\alpha) \parallel C \parallel \mathcal{B}(\Box\gamma) \ag{\mathcal{R}(\gamma)}$.
    Note that, for any non-temporal \flogic{} formula $\phi$, $\mathcal{B}(\Box\phi) = \mathcal{B}(\phi)$, and hence we will usually write $\mathcal{B}(\phi)$ below instead of $\mathcal{B}(\Box\phi)$ for brevity.
\end{definition}
The contract from the example above can be proved using the invariant $I_{TM}$ from Fig.~\ref{fig:toy-2pc-defs}.
In other words, we have $I_{TM} \vdash \sag{\True} ToyTM \sag{\rho_1 \land \rho_2}$.
The following theorem that shows that the local inductive invariant method is sufficient for showing that action-based contracts are fulfilled; we provide proof in Appendix~\ref{apx:action-ii-sound}.
\begin{theorem}
    \label{thm:action-ii-sound}
    $I \vdash \sag{\alpha} C \sag{\gamma}$ implies $\vdash \sag{\alpha} C \sag{\gamma}$
\end{theorem}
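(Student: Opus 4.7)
The plan is to unfold the definitions and chain together the two theorems we already have, namely Theorem~\ref{thm:ii-implies-safe} (local inductive invariants imply state-based fulfillment) and Theorem~\ref{thm:sfl-aut-equiv} (semantic equivalence of $\Box\phi$ and $\mathcal{T}(\Box\phi)$). By Definition~\ref{def:sfl-local-ii}, the hypothesis $I \vdash \sag{\alpha} C \sag{\gamma}$ expands to $I \vdash \ag{\mathcal{R}(\alpha)} \mathcal{B}(\alpha) \parallel C \parallel \mathcal{B}(\gamma) \ag{\mathcal{R}(\gamma)}$, and Theorem~\ref{thm:ii-implies-safe} immediately promotes this to $\vdash \ag{\mathcal{R}(\alpha)} \mathcal{B}(\alpha) \parallel C \parallel \mathcal{B}(\gamma) \ag{\mathcal{R}(\gamma)}$. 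So the problem reduces to showing that \emph{state-based} contract fulfillment over the parallel composition with the two bridge components entails \emph{action-based} fulfillment $\mathcal{L}(\mathcal{T}(\Box\alpha) \parallel C) \subseteq \mathcal{L}(\Box\gamma)$.

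To do so I would take an arbitrary behavior $\sigma \in \mathcal{L}(\mathcal{T}(\Box\alpha) \parallel C)$ and show $\sigma \models \Box\gamma$. Because every action in $\mathcal{B}(\alpha)$ and $\mathcal{B}(\gamma)$ is always enabled and both bridge components are deterministic, there is a unique state-based behavior $\tau$ over the full composition $\mathcal{B}(\alpha) \parallel C \parallel \mathcal{B}(\gamma)$ that realizes the actions of $\sigma$; this is the same kind of construction used for Theorem~\ref{thm:sfl-aut-equiv}. Because $\sigma \in \mathcal{L}(\mathcal{T}(\Box\alpha) \parallel C)$, Theorem~\ref{thm:sfl-aut-equiv} gives $\sigma \models \Box\alpha$, and the fluent-by-fluent correspondence underlying that theorem transfers this to $\tau_i \models \mathcal{R}(\alpha)$ for every $i$. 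Crucially, this remains true after extending the state space with the variables of $\mathcal{B}(\gamma)$, since $\mathcal{R}(\alpha)$ does not mention those variables.

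Now I invoke the state-based contract: since $\tau$ is an execution of $\mathcal{B}(\alpha) \parallel C \parallel \mathcal{B}(\gamma)$ along which $\mathcal{R}(\alpha)$ holds at every step, it is also an execution of $(vars, Init \wedge \mathcal{R}(\alpha), Next \wedge \mathcal{R}(\alpha)')^p$, and therefore $\tau \models \Box \mathcal{R}(\gamma)$ by Definition~\ref{def:ag-contract}. Finally, applying the same fluent-to-state-variable correspondence in the opposite direction to the $\mathcal{B}(\gamma)$ component of $\tau$ yields $\sigma \models \Box\gamma$, completing the inclusion $\mathcal{L}(\mathcal{T}(\Box\alpha) \parallel C) \subseteq \mathcal{L}(\Box\gamma)$.

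The main obstacle is the careful bookkeeping at the seam between action- and state-based semantics: we need to know that the state-based behavior induced on $\mathcal{B}(\gamma)$ by the actions of $\sigma$ is exactly the one whose satisfaction of $\Box\mathcal{R}(\gamma)$ witnesses $\sigma \models \Box\gamma$, and that adjoining $\mathcal{B}(\gamma)$ in parallel neither adds nor removes action behaviors of $\mathcal{T}(\Box\alpha) \parallel C$. Both facts rest on the ``always enabled'' and determinism requirements of Definition~\ref{def:sym-fluent}, and they are exactly what was used inside the proof of Theorem~\ref{thm:sfl-aut-equiv}. If that underlying correspondence lemma is factored out explicitly, the present theorem becomes essentially a composition of three one-line invocations; otherwise the bulk of the work is re-deriving it for the two-bridge composition.
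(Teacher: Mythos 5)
Your proposal is correct and follows essentially the same route as the paper's proof: unfold Def.~\ref{def:sfl-local-ii}, promote via Thm.~\ref{thm:ii-implies-safe} to state-based fulfillment of $\ag{\mathcal{R}(\alpha)}\,\mathcal{B}(\alpha) \parallel C \parallel \mathcal{B}(\gamma)\,\ag{\mathcal{R}(\gamma)}$, then transfer an arbitrary $\sigma \in \mathcal{L}(\mathcal{T}(\Box\alpha) \parallel C)$ to a state-based execution of the composition with $\mathcal{B}(\gamma)$ (using the always-enabled and determinism conditions behind Thm.~\ref{thm:sfl-aut-equiv}), apply the contract to get $\Box\mathcal{R}(\gamma)$, and map back to $\sigma \models \Box\gamma$. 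The only slight imprecision is claiming uniqueness of $\tau$ over the full composition including the possibly nondeterministic $C$ (uniqueness is only needed, and only holds, for the bridge components' portion), which does not affect the argument.
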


\paragraph{Composing contracts and local inductive invariants.}
We now present inference rules for composing action-based contracts based on transitivity of action invariance.
The idea is that composition depends on finding an action invariant $\rho$--a bridge formula--that acts as the assumption for one contract and the guarantee for the other.
\begin{figure}
    \centering
    \begin{align*}
        &\inferrule*[lab=sfl-comp]{I_1 \vdash \sag{\alpha} C_1 \sag{\rho} \\ I_2 \vdash \sag{\rho} C_2 \sag{\gamma}}{I_1 \land I_2 \vdash \sag{\alpha} C_1 \parallel \mathcal{B}(\rho) \parallel C_2 \sag{\gamma}}
        \quad
        \inferrule*[lab=sfl-safe]{I_1 \vdash \sag{\alpha} C_1 \sag{\rho} \\ I_2 \vdash \sag{\rho} C_2 \sag{\gamma}}{\vdash \sag{\alpha} C_1 \parallel C_2 \sag{\gamma}}
    \end{align*}
    \caption{Assume-guarantee inference rules for transitive composition with an \flogic{} bridge formula $\rho$.}
    \label{fig:sfl-composition-rules}
\end{figure}

Based on this intuition, we present the inference rule \textsc{sfl-comp} in Fig.~\ref{fig:sfl-composition-rules} that composes contracts and their associated local inductive invariants.
Notice that, in the conclusion of \textsc{sfl-comp}, we include $\mathcal{B}(\rho)$ to ensure that $I_1 \land I_2$ is well-defined, given that $I_1$ and $I_2$ may reference the (auxiliary) variables in $\mathcal{B}(\rho)$.
We also point out that, by Def.~\ref{def:ag-contract}, the alphabet of a bridge formula must be contained in the shared alphabet of the two components whose contracts are being composed.
Because the actions of a bridge formula are shared between the two components, the action-based inference rules avoid an intermediary such as a bridge component.
Furthermore, the inference rule \textsc{sfl-safe} in Fig.~\ref{fig:sfl-composition-rules} shows that \textsc{sfl-comp} is sound, in the sense that it proves the composition of the components to be safe.
We prove that the rules \textsc{sfl-comp} and \textsc{sfl-safe} are sound in Appendix~\ref{apx:action-composition-sound}.

\subsection{Hybrid Assume-Guarantee Contracts}
\label{sec:ag-hybrid}
In the prior section (Sec.~\ref{sec:ag-theory-sfl}), we introduced an action-based assume-guarantee theory.
However, this theory is not sufficient for verifying that specifications satisfy state-based properties.
For example, using action-based contracts alone, one cannot show that the Toy-2PC protocol satisfies the state invariant $\Box Consistent$ because action-based guarantees are action invariants.

Consequently, in this section, we introduce \textit{hybrid} contracts, in which the assumption is an action invariant and the guarantee is a state invariant.
We will define contracts and the local inductive invariant proof method in this theory, after which we will provide inference rules for composition.
We will show that, with the addition of hybrid contracts, our action-based theory can verify state-based properties of systems.

\paragraph{Hybrid contracts and local inductive invariants.}
We will use the notation $\sag{\alpha} C \sag{G}$ for hybrid contracts, where $G$ is a state-based guarantee that $C$ must satisfy under the action-based assumption $\alpha$.
We use the same notation for action-based and hybrid contracts for uniformity, since we will use both types of contracts for verifying a single system. 
The meaning of a hybrid contract is that the component $C$, when restricted to the assumption $\alpha$, must always satisfy the guarantee $G$.
Similarly to Sec.~\ref{sec:ag-theory-sfl}, we will restrict the component to the assumption algebraically, by taking the parallel composition of $C$ and $\mathcal{T}(\Box\alpha)$.
However, we require the guarantee to hold in a state-based fashion.
We now define hybrid contracts from this intuition.
\begin{definition}
    \label{def:hybrid-ag-contract}
    Let $C = (vars,Init,Next)^p$ be a PSTS.
    Also, let $\alpha$ be a non-temporal \flogic{} formula and $G$ be a non-temporal FOL formula such that such that $Params(\alpha) \cup Params(G) \subseteq p$, $\act \alpha \subseteq \act C$, and $SV(G) \subseteq vars$.
    A hybrid contract $\sag{\alpha} C \sag{G}$ is \textit{fulfilled} if and only if $\mathcal{T}(\Box\alpha) \parallel C \models \Box G$.
    We will denote contract fulfillment with the notation $\vdash \sag{\alpha} C \sag{G}$.
\end{definition}
For example, $\sag{\rho_1 \land \rho_2} ToyRM \sag{Consistent}$ is a hybrid contract.
Proving fulfillment requires the inductive invariant method; much like in the action-based theory, we translate the hybrid contract to a state-based contract in order to use the local inductive invariant method from Sec.~\ref{sec:contracts}.
For a hybrid contract $\sag{\alpha} C \sag{G}$, the corresponding state-based contract is $\ag{\mathcal{R}(\alpha)} \mathcal{B}(\Box\alpha) \parallel C \ag{G}$.
This translation is similar to the action-based contract translation, except the state-based formula $G$ does not need to be translated.
Based on this translation, we provide the following definition for showing that a contract is fulfilled by a local inductive invariant.
\begin{definition}
    \label{def:hybrid-local-ii}
    Let $\alpha$ be a non-temporal \flogic{} formula and let $G$ be a non-temporal FOL formula.
    We define $I \vdash \sag{\alpha} C \sag{G}$ to be fulfilled if and only if $I \vdash \ag{\mathcal{R}(\alpha)} \mathcal{B}(\Box\alpha) \parallel C \ag{G}$.
    Similarly to Sec.~\ref{sec:ag-theory-sfl}, we will prefer to write $\mathcal{B}(\phi)$ instead of $\mathcal{B}(\Box\phi)$ since the two are equivalent.
\end{definition}
The contract from above can be proved using the invariant $I_{RM}$ from Fig.~\ref{fig:toy-2pc-defs}.
In other words, we have $I_{RM} \vdash \sag{\rho_1 \land \rho_2} ToyRM \sag{Consistent}$.
The following theorem that shows that the local inductive invariant method is sufficient to show that a hybrid contract is fulfilled; we provide proof in Appendix~\ref{apx:hybrid-ii-sound}.
\begin{theorem}
    \label{thm:hybrid-ii-sound}
    $I \vdash \sag{\alpha} C \sag{G}$ implies $\vdash \sag{\alpha} C \sag{G}$
\end{theorem}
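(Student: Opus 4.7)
The plan is to unfold the hybrid-contract definition of $I \vdash \sag{\alpha} C \sag{G}$ into the state-based setting, apply the state-based soundness result (Thm.~\ref{thm:ii-implies-safe}), and then repackage the result as a hybrid contract via the definition of $\mathcal{T}$. This mirrors the structure of the proof of Thm.~\ref{thm:action-ii-sound}, but the argument is noticeably simpler because the guarantee $G$ is already state-based, so there is no need to translate $G$ through $\mathcal{B}$ or reason about state-to-action behavior correspondences on the guarantee side.

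Concretely, I would proceed as follows. First, from the hypothesis $I \vdash \sag{\alpha} C \sag{G}$, Def.~\ref{def:hybrid-local-ii} gives $I \vdash \ag{\mathcal{R}(\alpha)} \mathcal{B}(\alpha) \parallel C \ag{G}$. Second, applying Thm.~\ref{thm:ii-implies-safe} yields $\vdash \ag{\mathcal{R}(\alpha)} \mathcal{B}(\alpha) \parallel C \ag{G}$. Third, writing $\mathcal{B}(\alpha) = (vars,Init,Next)^p$ and unfolding Def.~\ref{def:ag-contract} together with parallel composition shows that
\begin{equation*}
    (vars,\ Init \land \mathcal{R}(\alpha),\ Next \land \mathcal{R}(\alpha)')^p \parallel C \ \models\ \Box G.
\end{equation*}
Fourth, by Def.~\ref{def:sfl-st}, the left-hand component is exactly $\mathcal{T}(\Box\alpha)$, so we obtain $\mathcal{T}(\Box\alpha) \parallel C \models \Box G$, which is precisely $\vdash \sag{\alpha} C \sag{G}$ by Def.~\ref{def:hybrid-ag-contract}.

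The main obstacle I anticipate is not difficulty of reasoning but rather the bookkeeping around how the parallel composition interacts with the conjunction of $\mathcal{R}(\alpha)$ into the initial predicate and transition relation. Specifically, one must verify that $\mathcal{B}(\alpha) \parallel C$ with its $Init$ strengthened by $\mathcal{R}(\alpha)$ and its $Next$ strengthened by $\mathcal{R}(\alpha)'$ is the same PSTS as $\mathcal{T}(\Box\alpha) \parallel C$. This follows because parallel composition preserves the parameters, the state variables of $\mathcal{B}(\alpha)$ and $C$ are disjoint, and $\mathcal{R}(\alpha)$ only mentions variables from $\mathcal{B}(\alpha)$, so conjoining $\mathcal{R}(\alpha)$ and $\mathcal{R}(\alpha)'$ into the composed initial state predicate and transition relation commutes with taking the composition. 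Once this identification is made explicit, the chain of implications closes with no further work, and no appeal to Lemma~\ref{lem:exists-st-tr} or the determinism/always-enabled properties of $\mathcal{B}$ is needed, since we never need to convert between state-based and action-based behaviors on the guarantee side.
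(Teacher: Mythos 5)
Your proposal is correct and follows essentially the same route as the paper's own proof: unfold Def.~\ref{def:hybrid-local-ii}, apply Thm.~\ref{thm:ii-implies-safe}, rewrite the strengthened composition as $\mathcal{T}(\Box\alpha) \parallel C$ via Def.~\ref{def:sfl-st}, and conclude by Def.~\ref{def:hybrid-ag-contract}. Your extra remark justifying that conjoining $\mathcal{R}(\alpha)$ and $\mathcal{R}(\alpha)'$ commutes with the parallel composition only makes explicit a step the paper leaves implicit.
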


\paragraph{Composing contracts and local inductive invariants.}
Hybrid contracts are intended to be composed with action-based contracts.
Therefore, we present the inference rule \textsc{hybrid-comp} in Fig.~\ref{fig:hybrid-composition-rules} that composes an action-based contract with a hybrid contract, as well as their associated local inductive invariants.
Similarly to Sec.~\ref{sec:ag-theory-sfl}, we include $\mathcal{B}(\rho)$ in the conclusion of the rule \textsc{hybrid-comp} to ensure that the local invariant is well-defined.
The inference rule \textsc{hybrid-safe} in Fig.~\ref{fig:hybrid-composition-rules} additionally shows that \textsc{hybrid-comp} is sound, in the sense that it proves the composition of the components to be safe.
We prove that the two inference rules \textsc{hybrid-comp} and \textsc{hybrid-safe} are sound in Appendix~\ref{apx:hybrid-composition-sound}.
\begin{figure}
    \centering
    \begin{align*}
        &\inferrule*[lab=hybrid-comp]{I_1 \vdash \sag{\alpha} C_1 \sag{\rho} \\ I_2 \vdash \sag{\rho} C_2 \sag{G}}{I_1 \land I_2 \vdash \sag{\alpha} C_1 \parallel \mathcal{B}(\rho) \parallel C_2 \sag{G}}
        \quad
        \inferrule*[lab=hybrid-safe]{I_1 \vdash \sag{\alpha} C_1 \sag{\rho} \\ I_2 \vdash \sag{\rho} C_2 \sag{G}}{\vdash \sag{\alpha} C_1 \parallel C_2 \sag{G}}
    \end{align*}
    \caption{Assume-guarantee inference rules for transitive composition with an \flogic{} bridge formula $\rho$.}
    \label{fig:hybrid-composition-rules}
\end{figure}

We now return to the Toy-2PC example.
Recall from above $I_{RM} \vdash \sag{\rho_1 \land \rho_2} ToyRM \sag{Consistent}$, and also from Sec.~\ref{sec:ag-theory-sfl} that $I_{TM} \vdash \sag{\True} ToyTM \sag{\rho_1 \land \rho_2}$.
By the \textsc{hybrid-safe} rule, we can conclude that $\vdash \sag{\True} ToyTM \parallel ToyRM \sag{Consistent}$, i.e. that the entire system is safe.
Furthermore, the \textsc{hybrid-comp} rule shows that $I_{TM} \land I_{RM}$ is an inductive invariant \textit{for the entire system}.

\section{Compositional Inductive Invariant Inference}
\label{sec:ii-infer}
Using the two-layered assume-guarantee theory from the prior two sections, we now present a framework for compositional inductive invariant inference.
The problem statement is as follows: given a system $S$ and an invariant $\Box P$, find an inductive invariant that proves $S \models \Box P$.
Our compositional inference framework solves this problem with the following four steps: (1) system decomposition, (2) bridge formula inference, (3) local inductive invariant inference, and (4) global safety.
The visual in Fig.~\ref{fig:overview} shows the four steps at a high level; here, we provide details for each step, including examples using Toy-2PC.

\paragraph{(1) System decomposition.}
In this step, we decompose $S$ into components $C_1, \dots, C_n$ such that $S = C_1 \parallel \dots \parallel C_n$.
For example, in Toy-2PC, we decompose $Toy2PC$ into two components ($n=2$), $ToyRM$ and $ToyTM$ from Fig.~\ref{fig:toy-2pc-components}.

\paragraph{(2) Bridge formula inference.}
The goal of this step is to find formulas $\phi_1,\dots,\phi_{n+1}$ to create candidate contracts $\sag{\phi_i} C_i \sag{\phi_{i+1}}$.
The formulas $\phi_1,\dots,\phi_n$ are action invariants with the requirement that $\phi_1 = \True$.
The final formula $\phi_{n+1}$ is required to be the state-based safety property, i.e. $\phi_{n+1} = P$.
Consequently, the first $n-1$ contracts are action-based, while the final contract is hybrid.
In the Toy-2PC example, $\phi_1 = \True$, $\phi_2 = \rho_1 \land \rho_2$, and $\phi_3 = Consistent$, making the candidate contracts $\sag{\True} TM \sag{\rho_1 \land \rho_2}$ and $\sag{\rho_1 \land \rho_2} RM \sag{Consistent}$.

\paragraph{(3) Local inductive invariant inference.}
For each candidate contract $\sag{\phi_i} C_i \sag{\phi_{i+1}}$, the goal is to infer a local inductive invariant $I_i$ such that $I_i \vdash \sag{\phi_i} C_i \sag{\phi_{i+1}}$.
By Def.~\ref{def:sfl-local-ii}, the first $n-1$ contracts can be translated to the state-based contract $I_i \vdash \ag{\mathcal{R}(\phi_i)} \mathcal{B}(\phi_i) \parallel C_i \parallel \mathcal{B}(\phi_{i+1}) \ag{\mathcal{R}(\phi_{i+1})}$.
By Def.~\ref{def:hybrid-local-ii}, we can also translate the final contract to the hybrid contract $I_i \vdash \ag{\mathcal{R}(\phi_i)} \mathcal{B}(\phi_i) \parallel C_i \ag{\phi_{i+1}}$.
In both cases, we obtain a state-based contract that entails a local inductive invariant inference problem for finding $I_i$.

We describe how to solve the local inductive invariant inference problem for the first $n-1$ contracts; the solution for the final hybrid contract is similar.
Suppose that $\mathcal{B}(\phi_i) \parallel C_i \parallel \mathcal{B}(\phi_{i+1}) = (vars, Init, Next)^p$, then we construct $D_i = (vars, Init \land \mathcal{R}(\phi_i), Next \land \mathcal{R}(\phi_i)')^p$.
By Def.~\ref{def:ag-contract}, any inductive invariant that proves $D_i \models \Box \mathcal{R}(\phi_{i+1})$ also serves as a local inductive invariant $I_i$ for the corresponding contract.
However, verifying $D_i \models \Box \mathcal{R}(\phi_{i+1})$ reduces to a global inductive invariant inference problem, which can be solved using off-the-shelf tools.

For Toy-2PC, the state-based proof obligations are to find $I_1$ and $I_2$ such that $I_1 \vdash \ag{\True} ToyTM \parallel ToyB \ag{ToyR}$ and $I_2 \vdash \ag{ToyR} ToyB \parallel ToyRM \ag{Consistent}$.
The proof obligation for the first contract reduces to finding an inductive invariant $I_1$ for $ToyTM \parallel ToyB \models \Box ToyR$, while the proof obligation for the second contract reduces to finding an inductive invariant $I_2$ for $\mathcal{T}(\Box(\rho_1 \land \rho_2)) \parallel ToyRM \models \Box Consistent$.
In this case, $I_1 = I_{TM}$ and $I_2 = I_{RM}$ are sufficient choices.

We also point out that local inductive invariant inference can be parallelized because each contract constitutes an independent proof obligation.
We will utilize this fact during our evaluation in Sec.~\ref{sec:evaluation}.

\paragraph{(4) Global safety.}
Let $I$ be the conjunction of each local inductive invariant, i.e. $I = I_1 \land \dots \land I_n$.
Also, let $B$ be the parallel composition of all bridge components, i.e. $B = \mathcal{B}(\phi_1) \parallel \dots \parallel \mathcal{B}(\phi_n)$.
Then, by the rules \textsc{sfl-comp} and \textsc{hybrid-comp}, we can infer that $I \vdash \sag{\True} S \parallel B \sag{P}$.
Furthermore, by the rules \textsc{sfl-safe} and \textsc{hybrid-safe}, we can infer that $\vdash \sag{\True} S \sag{P}$.
In other words, the entire system is safe and $I$ is an inductive invariant for the entirety of $S$.
In the Toy-2PC example, $I_{TM} \land I_{RM}$ is an inductive invariant $I$ for the entire protocol.


\section{Evaluation}
\label{sec:evaluation}
We  evaluate our inductive invariant inference method by applying it to two case studies of distributed protocols, both of which are specified in \TLA{}.
We compare the compositional approach to the global approach using Endive \cite{schultz:2022indinvs} as a baseline.
Endive is a state-of-the-art tool for automatic inductive invariant inference for \TLA{} specifications.
In addition, we use Endive as our off-the-shelf tool for local inductive invariant inference, as described in our compositional framework (Sec.~\ref{sec:ii-infer}).

The first case study is the full version of the Two Phase Commit protocol \cite{Gray:2006}, in which we compare automatic inductive invariant inference using both the global and compositional methods.
The second case study is an industrial-scale protocol of a Raft \cite{Ongaro:2014} style algorithm that runs at MongoDB \cite{Mongodb}.
To the best of our knowledge, this case study is beyond the reach of any existing algorithm for automatic inductive invariant inference, including Endive.
However, with the compositional approach, we infer an inductive invariant for the protocol semi-automatically.
Between the two case studies, we provide evidence that, in comparison to the global approach, the compositional approach can be more efficient for inductive invariant inference and results in more concise proofs.
Additionally, we show that the artifacts from compositional inference, including bridge formulas and local inductive invariants, provide insights into the behavior of the protocols beyond those of the global approach.

We manually decomposed all specifications in our case studies; automatic specification decomposition is well-studied \cite{Cobleigh:2006,Dardik:2024,Metzler:2008,Nam:2008} and beyond the scope of this paper.
We also manually created all bridge formulas in the case studies.
Bridge formula inference is a nontrivial task whose automation is a significant research problem on its own and is beyond the scope of this paper.
We also include a machine checked proof for each inductive invariant written in the \TLA{} Proof System (TLAPS) \cite{cousineau:2012tla} proof language.
All experiments and proofs were run on MacOS version 14.7.6 with an M1 Pro chip, and are available in our supplementary materials.

\subsection{Case Study 1: Two Phase Commit}
\label{sec:2pc-case-study}
In this case study, we verify the (full) Two Phase Commit (2PC) protocol \cite{Gray:2006}.
We compare automatic inductive invariant inference between our compositional framework and the global approach.
\begin{figure}
    \begin{align*}
        \rho_r \triangleq &\land (\exists rm \in RMs : onceRcvCommit(rm)) \implies (\forall rm \in RMs : onceSndPrepare(rm)) \\
        &\land (\exists rm \in RMs : onceRcvAbort(rm)) \implies (\forall rm \in RMs : \neg onceRcvCommit(rm))
    \end{align*}
    \begin{align*}
        \rho_e \triangleq &\land (\exists rm \in RMs : onceSndCommit(rm)) \implies (\forall rm \in RMs : onceRcvPrepare(rm)) \\
        &\land (\exists rm \in RMs : onceSndAbort(rm)) \implies (\forall rm \in RMs : \neg onceSndCommit(rm))
    \end{align*}
    \caption{Definitions for the bridge formulas in the 2PC protocol.}
    \label{fig:2pc-full-bridge}
\end{figure}
\begin{figure}
    \begin{align*}
        J_g \triangleq\ \ &\forall rm \in RM : (tmState = ``init" \land [type \mapsto ``Prepared", theRM \mapsto rm] \in msgs) \implies \\
        &\hspace{3mm}(rmState[rm] = ``prepared") \\
        J_l \triangleq\ \ &\forall rm \in RMs : onceSndPrepare[rm] \implies rmState[rm] \neq ``working"
    \end{align*}
    \caption{$J_g$ is a conjunct of the global inductive invariant for 2PC while $J_l$ is a conjunct of a local inductive invariant for 2PC.
        Both formulas specify that resource managers cannot abort after sending a prepare message.}
    \label{fig:2pc-full-conjs}
\end{figure}

\paragraph{Protocol description.}
The 2PC protocol allows messages to be delayed, dropped, and reordered, while Toy-2PC assumes a perfect network.
Rather than single $Prepared$, $Commit$ and $Abort$ actions, this protocol models sending and receiving each message over a network.
Therefore, 2PC includes the actions $SndPrepared$ and $RcvPrepared$, etc.
The protocol also includes an additional state variable $msgs$ that records all messages that have been sent across the network; message delay, dropping, and reordering are modeled as a receiver ignoring a message for some period of time (e.g., indefinitely for a dropped message).

\paragraph{Inductive invariant inference.}
For global inference, we use the inductive invariant and the proof reported in the Endive paper \cite{schultz:2022indinvs}.
For compositional inference, we use the framework presented in Sec.~\ref{sec:ii-infer}.
We outline the four steps of the framework below.

Step (1) \textit{system decomposition.}
We decompose the system into three components $RM$, $Env$, and $TM$.
The components respectively represent the resource managers, the environment, and the transaction manager.

Step (2) \textit{bridge formula inference.}
The candidate contracts are $\sag{\True} TM \sag{\rho_e}$, $\sag{\rho_e} Env \sag{\rho_r}$, and $\sag{\rho_r} RM \sag{Consistent}$, where the bridge formulas $\rho_r$ and $\rho_e$ are shown in Fig.~\ref{fig:2pc-full-bridge}.
The fluents in $\rho_r$ and $\rho_e$ follow the \textit{once} pattern, e.g., from Fig.~\ref{fig:toy-2pc-fluents}; we therefore omit the explicit definition of the fluents for brevity.
Intuitively, $\rho_r$ specifies the acceptable behavior of the network, while $\rho_e$ specifies the acceptable behavior of the transaction manager.
The network requirement $\rho_r$ is conceptually the same as $\rho_1 \land \rho_2$ (Fig.~\ref{fig:toy-2pc-sfl-formula}) from the Toy-2PC protocol, except $\rho_r$ describes the requirement in terms of messages that the resource managers can receive from the network.
Likewise, the transaction manager requirement $\rho_e$ is conceptually the same as $\rho_1 \land \rho_2$, except $\rho_e$ describes messages that the transaction manager may send over the network.

Step (3) \textit{local inductive invariant inference.}
We first translate each of the three contracts to their state-based counterpart.
After, we use Endive to infer a local inductive invariant.
We include a TLAPS proof for each inductive invariant that Endive returned.

Step (4) \textit{global safety.}
By proof rules \textsc{sfl-comp} and \textsc{hybrid-comp}, the conjunction of the three local inductive invariants is an inductive invariant for the entire 2PC protocol.
Furthermore, we infer that the protocol is safe by rules \textsc{sfl-safe} and \textsc{hybrid-safe}.

\paragraph{Results.}
\begin{table}[]
    \centering
    \caption{Comparison between global and compositional inductive invariant inference for 2PC.
        All times from Endive are reported in seconds.}
    \label{tab:2pc-compare}
    \begin{tabular}{lrrr}
         \textbf{Specification} & \textbf{Endive Time} & \textbf{Proof Size} & \textbf{Inv. Size} \\ 
         \hline
         $TwoPhase$ (global) & 34.82 & 107 & 10 \\
         \hline
         $RM$ & 15.44 & 8 & 5 \\
         $Env$ & 30.43 & 86 & 8 \\
         $TM$ & 6.13 & 8 & 5 \\
    \end{tabular}
\end{table}
We summarize our results in Table.~\ref{tab:2pc-compare}. 
For each specification, we include the run-time for Endive in seconds (Endive Time), the size of the corresponding proof (Proof Size), and the number of conjuncts in the inductive invariant (Inv. Size) as a rough measure for its size.
When reporting the size of the proof, we count only the number of proof steps to avoid counting irrelevant lines that result due to formatting.

\paragraph{Discussion.}
\textbf{Efficiency of invariant inference.}
In terms of run-time for automatic inductive invariant inference, the global approach is the least efficient at 34.82 seconds.
By parallelizing local inductive invariant inference, as recommended in Sec.~\ref{sec:ii-infer}, compositional inference is performed more efficiently (30.43 seconds) than the global approach.
This comparison provides evidence that compositional inference can be more efficient than the global approach.

\textbf{Proof conciseness.}
Table.~\ref{tab:2pc-compare} shows that the proof sizes for the local inductive invariants, both individually and combined, are smaller than the proof for the global invariant.
These results provide evidence that the proofs for local inductive invariants may be more concise than the proofs for global invariants.
We also point out that the number of conjuncts in each local inductive invariant is smaller than the number in the global invariant; however, we consider the number of conjuncts to be a rough, secondary measure for evaluating the conciseness of the proof for an inductive invariant.

\textbf{Insights from compositional inference.}
In addition to their value for verification, inductive invariants are also valuable for the insights they provide about a system.
The artifacts from compositional inference--bridge formulas and local inductive invariants--also provide insights about systems, some of which are not found in global inductive invariants.
For example, the two conjuncts of both bridge formulas $\rho_r$ and $\rho_e$ offer a concise modular specification of the two phases of the 2PC protocol:
the first conjuncts specify that all resource managers attempt to prepare in the first phase before a commit, while the second conjuncts specify that the transaction manager chooses to either commit or abort in the second phase.
However, no single part of the global inductive invariant offers this concise insight.

\textbf{Local invariants offer concise local insights.}
Furthermore, we observe that local insights tend to be encoded more concisely in local invariants in comparison with global invariants.
For example, a key requirement for the 2PC protocol is that once each resource manager sends a prepare message, it can no longer silently abort.
The global inductive invariant and the local inductive invariant for $RM$ both encode this requirement differently; we show the relevant conjunct from each invariant in Fig.~\ref{fig:2pc-full-conjs}.
Notice that the global invariant refers to the state variables from all three components of the system ($RM$, $Env$, and $TM$), while the local invariant refers only to the state variables and actions for the relevant component ($RM$).
By referring to only the relevant component, the local inductive invariant reduces the scope of the protocol that is needed to understand the key requirement.

\subsection{Case Study 2: Mongo Static Raft}
\label{sec:mongo-case-study}
In this case study, we verify Mongo Static Raft (MSR) \cite{Schultz:2021}, an industrial-scale protocol based on Raft \cite{Ongaro:2014} that runs at MongoDB \cite{Mongodb}.
We describe and compare our effort to infer an inductive invariant for MSR using both the global and compositional approaches.

\paragraph{Protocol description.}
MSR is a distributed consensus protocol that uses leader elections to maintain a consistent replicated state machine.
The protocol begins with an election, where a server within a cluster is decided to be the \textit{primary}.
Each server updates its metadata with the primary server and the \textit{term}--a natural number that increases with each election--for which the primary will serve.
The primary server accepts client requests and tracks them in a local data structure called a \textit{log}.
The servers in the cluster replicate the contents of their log in gossip style to the other nodes in the cluster.
Once a majority of servers add a client request to their log, the request is considered to be \textit{committed} and each server adds it to their state machine.

The safety property for the protocol is state machine safety (SMS), which we show formally in Fig.~\ref{fig:mongo-sms}.
This property specifies that, for any two commits $c_1$ and $c_2$, if they share the same index ($c_1.ind = c_2.ind$), then the two commits must be identical ($c_1 = c_2$).
In general, a commit $c$ records the index in the log ($c.ind$) and the term in which the client request occurred ($c.term$).
The MSR specification uses the term ($c.term$) as an abstract representation for the contents of a client request.

In the MSR protocol, the network can drop or reorder messages and can also cause the servers to be partitioned.
However, servers are expected to act in a failstop manner, meaning that failed servers do not produce faulty or corrupt messages.
When a leader fails or if the network is partitioned, the protocol will elect a new primary in a new term.
All servers have the ability to ``catch up'' other servers in a lower term by rolling back extraneous log entries and adding correct ones.
Additionally, the protocol derives its name from the assumption that its configuration--the set of nodes in the cluster--is static.

In the \TLA{} encoding of the protocol \cite{Schultz:2021}, the set of servers is given as a parameter.
The specification includes the following state variables: $committed$, $log$, $state$, and $currentTerm$, which represent the set of commits, the log, whether each server is a primary, and the current election term.
The variable $committed$ keeps track of the set of all commits throughout the protocol, while the remaining three variables represent data structures that are local to each server.

\begin{figure}
    \begin{subfigure}[b]{0.47\textwidth}
        \vspace{2mm}
        {\small \input{tla/msr_commitAtTermInd}}
        \caption{Fluent definition for \textit{commitAtTermInd}.}
        \label{fig:mongo-commmitAtTermInd}
    \end{subfigure}
    \hfill
    \begin{subfigure}[b]{0.48\textwidth}
        \vspace{2mm}
        {\small \input{tla/msr_leaderAtTermServ}}
        \caption{Fluent definition for \textit{leaderAtTermServ}.}
        \label{fig:mongo-leaderAtTermServ}
    \end{subfigure}
    \begin{subfigure}[b]{0.99\textwidth}
        \vspace{2mm}
        {\small \input{tla/msr_currentLeader}}
        \caption{Fluent definition for \textit{currentLeader}.}
        \label{fig:mongo-currentLeader}
    \end{subfigure}
    \begin{subfigure}[t]{0.99\textwidth}
        \begin{align*}
            &StateMachineSafety \triangleq \forall c_1,c_2 \in committed : (c_1.ind = c_2.ind) \implies c_1 = c_2
        \end{align*}
        \caption{The key safety property $StateMachineSafety$, or $SMS$ for short.}
        \label{fig:mongo-sms}
    \end{subfigure}
    \begin{subfigure}[t]{0.99\textwidth}
        \begin{align*}
            \rho_c \triangleq\ &\forall i \in \mathbb{N} : \exists t_1 \in \mathbb{N} : \forall t_2 \in \mathbb{N} : commitAtTermInd(t_2,i) \implies (t_1 = t_2) \\
            \rho_l^* \triangleq\ &\forall t \in \mathbb{N} : \forall s_1,s_2 \in Server : \\
            &\hspace{3mm}(leaderAtTermServ(t,s_1) \land leaderAtTermServ(t,s_2)) \implies (s_1 = s_2)
        \end{align*}
        \caption{$\rho_c$ is the bridge formula for the $Committed$ and $Log$ components.
            $\rho_l^*$ is one conjunct of $\rho_l$, the bridge formula for the $Log$ and $StateTerm$ components.}
        \label{fig:mongo-bridges}
    \end{subfigure}
    \caption{Definitions for the safety property and bridge formulas in the Mongo Static Raft protocol.}
\end{figure}

\paragraph{Inductive invariant inference.}
We attempted to use Endive to automatically infer a global inductive invariant for MSR.
However, Endive exits with failure for global inference because it could not find predicates to eliminate all counter examples to induction (CTIs).
Therefore, we manually constructed a global inductive invariant for MSR, which we proved with TLAPS.

With our compositional framework, we inferred an inductive invariant for MSR semi-automatically.
We also used Endive in attempt to automatically infer each local inductive invariant.
Endive completed successfully for one component and failed for two components, where the failures were also because the tool could not eliminate all CTIs.
We now describe our compositional inference effort in detail for the four steps from Sec.~\ref{sec:ii-infer}.

Step (1) \textit{system decomposition.}
We decompose MSR into three components $Committed$, $Log$, and $StateTerm$.
The components respectively represent the set of commits, the log for each server, and the state and term metadata for each server.

Step (2) \textit{bridge formula inference.}
The candidate contracts are $\sag{\True} StateLog \sag{\rho_l}$, $\sag{\rho_l} Log \sag{\rho_c}$, and $\sag{\rho_c} Committed \sag{SMS}$.
In Fig.~\ref{fig:mongo-bridges}, we show the bridge formula $\rho_c$ as well as one exemplar conjunct $\rho_l^*$ for the bridge formula $\rho_l$.
We show the entire bridge formula $\rho_l$ in Appendix~\ref{apx:mongo-fluent-defs}, Fig.~\ref{fig:mongo-apx}.
We show the fluent definitions (transition systems only) for \textit{commitAtTermInd}, \textit{leaderAtTermServ}, and \textit{currentLeader} in Figures \ref{fig:mongo-commmitAtTermInd}, \ref{fig:mongo-leaderAtTermServ}, and \ref{fig:mongo-currentLeader} respectively.
The definitions for the remaining fluents are included in Appendix~\ref{apx:mongo-fluent-defs}.

The fluents \textit{commitAtTermInd} and \textit{leaderAtTermServ} use the \textit{once} style as seen in the 2PC and Toy-2PC protocols.
These three fluents respectively represent whether a commit has happened at a specific term and index, and whether a server has been elected leader at a specific term.
The fluent \textit{currentLeader}, however, represents servers who believe themselves to be a leader at a specific term; this may be more than one server in the case of a network partition.
In the definition for this fluent (Fig.~\ref{fig:mongo-currentLeader}), a server believes itself to be a leader upon a \textit{BecomeLeader} action, but no longer once its log is updated via a \textit{GetEntries} or \textit{RollbackEntries} action.


Step (3) \textit{local inductive invariant inference.}
For each contract, we attempted to automatically infer a local inductive invariant with Endive.
Endive failed for the $Log$ and $StateTerm$ components, but succeeded for the $Committed$ component.
For the two cases that Endive failed, we manually constructed a local inductive invariant.
We also include a TLAPS proof for all local inductive invariants.

Step (4) \textit{global safety.}
By proof rules \textsc{sfl-comp} and \textsc{hybrid-comp}, the conjunction of the three local inductive invariants is an inductive invariant for the entire MSR protocol.
Furthermore, we infer that the protocol is safe by rules \textsc{sfl-safe} and \textsc{hybrid-safe}.

\begin{table}[]
    \centering
    \caption{Comparison between global and compositional inductive invariant inference for Mongo Static Raft.}
    \label{tab:mongo-compare}
    \begin{tabular}{lrrr}
         \textbf{Specification} & \textbf{Endive Time} & \textbf{Proof Size} & \textbf{Inv. Size} \\ 
         \hline
         $MSR$ (global) & NA & 1,686 & 17 \\
         \hline
         $Committed$ & 19.80 & 61 & 2 \\
         $Log$ & NA & 123 & 9 \\
         $StateTerm$ & NA & 188 & 7
    \end{tabular}
\end{table}
\paragraph{Results.}
We summarize our results for each specification in Table.~\ref{tab:mongo-compare}.
We include the run-time for Endive in seconds (Endive Time), unless the tool failed in which case we write NA.
We also report the size of the proof for the inductive invariant (Proof Size) and the number of conjuncts in the invariant (Inv. Size). 
Similarly to the 2PC case study (Sec.~\ref{sec:2pc-case-study}), we count only the number of proof steps when we report the proof size.

\paragraph{Discussion.}
\textbf{Efficiency of invariant inference.}
Table.~\ref{tab:mongo-compare} shows that Endive fails for the global inductive invariant inference task.
In the compositional approach, Endive fails for the \textit{Log} and \textit{StateTerm} components so we manually created local inductive invariants for these two components.
However, Endive completed successfully for the \textit{Committed} component, which is smaller and less complex than the other two components.
Ultimately, we inferred an inductive invariant semi-automatically with the compositional approach, which provides evidence that compositional inductive invariant inference can be more efficient than global inference.
This result also suggests that the compositional approach may benefit from further research into system decomposition, specifically for finding simpler components.

\textbf{Proof conciseness.}
Table.~\ref{tab:mongo-compare} shows that the length of the TLAPS proof for each individual component is shorter than the proof for the global invariant by an order of magnitude.
This result provides evidence that the proof for local inductive invariants may be shorter than the proofs for global inductive invariants.
We also point out that the local inductive invariants have fewer conjuncts, but offer this observation as a rough, secondary metric.

\textbf{Insights from compositional inference.}
Much like the 2PC case study, we found that the artifacts from compositional inference provided valuable insights about the case study that are not captured by the global inductive invariant.
For example, consider $\rho_l^*$ in Fig.~\ref{fig:mongo-bridges}, which is a conjunct of the bridge formula $\rho_l$.
The formula $\rho_l^*$ represents the \textit{one leader per term} property, which is a key lemma at the heart of the correctness argument for Raft \cite{Ongaro:2014}.
While this property does appear in the global inductive invariant, the fact that it appears in the bridge formula $\rho_l$ offers the more specific insight that \textit{one leader per term} is guaranteed by the $StateTerm$ component and assumed by the $Log$ component.
This insight could provide the basis for modular optimizations to the protocol, e.g., using a new leader election scheme in the $StateTerm$ component that maintains the property.

The presence of important properties such as \textit{one leader per term} in bridge formulas may also be a reason that the local inductive invariant proofs are more concise than the global proof.
Using the \textit{Log} component as an example, the intuition for this reasoning is that local inductive invariant inference becomes simpler when it can assume key lemmas such as \textit{one leader per term}, rather than needing to infer them.

\textbf{Simpler constraints in local invariants.}
We observe that local inductive invariants can specify simpler constraints when compared with the global invariant.
For example, consider the conjunct $K_g$ from the global invariant in Fig.~\ref{fig:mongo-in-log-conjs}.
This conjunct specifies an intricate condition for when an entry must appear in the logs of a secondary (non-primary) server.
The purpose of this conjunct is to help establish leader completeness properties in the induction proof.
However, the compositional approach avoids tying the log to leader completeness properties, which results in a simpler constraint than $K_g$.
The fact that the leader completeness properties do not depend on the log is explicit in the compositional approach because the $Log$ component assumes all its leader completeness properties from the bridge formula $\rho_l$.
As a result, the local inductive invariant for \textit{Log} specifies simpler constraints on the log.
We show the lone conjunct $K_l$ that constrains the log in Fig.~\ref{fig:mongo-in-log-conjs}.


\begin{figure}
    \begin{align*}
        K_g \triangleq\ &\forall s \in Server : (state[s] = Secondary \land LastTerm(log[s]) = currentTerm[s]) \implies \\
        &\hspace{3mm}\lor\ \exists p \in Server : \\
        &\hspace{6mm}\land\ state[p] = Primary \\
        &\hspace{6mm}\land\ currentTerm[p] = currentTerm[s] \\
        &\hspace{6mm}\land\ LastTerm(log[p]) \geq LastTerm(log[s]) \\
        &\hspace{6mm}\land\ Len(log[p]) \geq Len(log[s]) \\
        &\hspace{3mm}\lor\ \exists p \in Server : \\
        &\hspace{6mm}\land\ state[p] = Primary \\
        &\hspace{6mm}\land\ currentTerm[p] > currentTerm[s] \\
        &\hspace{3mm}\lor\ \forall t \in Server : state[t] = Secondary \\
        K_l \triangleq\ &\forall t_1,t_2,i \in \mathbb{N} : \forall s \in Server : \\
        &\hspace{3mm}(commitAtTermInd[t_1][i] \land currentLeader[s][t_2] \land t_1 \leq t_2) \implies log[s][i] = t_1 
    \end{align*}
    \caption{$K_g$ is a conjunct from the global invariant and $K_l$ is a conjunct from the local invariant for \textit{Log}.
        Both formulas specify when an entry must be present in the log.}
    \label{fig:mongo-in-log-conjs}
\end{figure}

\section{Related Work}
\label{sec:related-work}
Many techniques exist for automatically inferring inductive invariants.
Approaches include interpolation \cite{mcmillan:2003}, ICE learning \cite{Garg:2014}, syntax-driven enumeration of invariants \cite{Hance:2021,Yao:2021,Yao:2022}, and incremental lemma learning \cite{bradley:2011,Goel:2021,Koenig:2020,Koenig:2022,schultz:2022indinvs}.
Theoretical work based on the \textit{Hoare query model} has shown that the incremental approach--specifically with relative induction checks--can learn inductive invariants exponentially faster than with induction checks alone \cite{Feldman:2019}.
A more recent approach, based on the observation that counter examples to induction (CTIs) help inductive invariant inference algorithms to find new predicates and vice versa (duality), attempts to balance the search for CTIs and predicates to achieve progress theorems \cite{Padon:2022}.
While the above techniques have certain advantages, each one experiences scalability issues by attempting to infer an inductive invariant for the transition relation of an entire system.
In contrast, our compositional framework divides the transition relation to improve the efficiency of inductive invariant inference.
However, the techniques above are complementary to our compositional framework because they can be used for local inductive invariant inference.

Due to the limitations of fully automated inductive invariant inference, frameworks have been proposed for assisting users in finding an inductive invariant.
Ivy \cite{Padon:2016} and similar languages \cite{Wilcox:2024} are popular frameworks that encourage a user to write specifications in a decidable fragment of FOL.
Kondo \cite{Zhang:2024kondo} is also a tool that requires user assistance, but is specifically designed to strike a balance between automation and manual proof effort.
Lamport describes a technique for finding inductive invariants for \TLA{} \cite{Lamport2018UsingTT} in which a person manually finds CTIs using the TLC model checker \cite{Yu:1999}.
Each of these techniques also runs into scalability issues, both from to the manual efforts involved and the fact that a user attempts to find a global inductive invariant for the entire transition relation.

There is a large body of work for automating compositional verification for finite-state processes.
Beginning with the seminal paper of Cobleigh et al. \cite{Cobleigh:2003}, researchers have attempted to find ways to learn and compute assumptions (which we call bridges), both explicitly \cite{alur:2005,Chen:2009,Cobleigh:2003,Gupta:2007,Nam:2008} and implicitly (symbolically) \cite{Chen:2010,He:2014}.
While these techniques are powerful, they do not, in general, apply to parameterized systems because they are inherently finite-state.
Our framework allows assume-guarantee contracts to be proved with local inductive invariants, which allows us to compositionally verify parameterized systems.

The method of Owicki Gries \cite{Owicki:1976}, Rely-Guarantee Reasoning \cite{Jones:1983}, and Concurrent Separation Logic \cite{OHEARN:2007} are each assume-guarantee instances for proving properties of programs; however, we are interested in verifying declarative specifications in this work.


Past Time Temporal Logic (PTL) \cite{Kamp:1968} is a variant of temporal logic with operators that can specify propositions that happened in the past.
For example, the \textit{once} operator specifies that a proposition was true at least once in the past and can be seen as the past-time version of the \textit{eventually} ($\lozenge$) temporal operator.
The \textit{once} operator is analogous to the \textit{once} style fluents from the 2PC and Toy-2PC protocols.
However, the expressiveness of PTL (and LTL) with actions is limited, which inspired the Fluent LTL specification language \cite{Giannakopoulou:2003}.
In our work, we introduce \flogic{} that builds upon Fluent LTL by including quantifiers and symbolic fluents that accept arguments.
Ultimately, the extensions in \flogic{} make the language appropriate for specifying parameterized systems.

\section{Limitations and Future Work}
\label{sec:future-work}
In this paper, we presented a compositional inductive invariant inference framework that is based on a two layer assume-guarantee theory.
As part of our presentation, we showed that the inference rules for contract composition are sound.
However, we leave a formal treatment of completeness for future work.
We also plan to investigate whether the inference rules have the cut elimination property, which may reveal a technique for eliminating auxiliary variables from inductive invariants that are inferred compositionally.

We created all bridge formulas in this paper manually because automated inference of bridge formulas is a nontrivial task.
In the future, we plan to create an algorithm for automatic inference of bridge formulas, specifically for the \flogic{} language.
We also plan to investigate whether \flogic{} is useful for tasks besides compositional invariant inference.
For example, we plan to explore whether we can compute the weakest assumption \cite{Giann:2002} of a system in \flogic{}, which may be useful for robustness analysis \cite{Zhang:2020}.

In our evaluation (Sec.~\ref{sec:evaluation}), we were able to use compositional inductive invariant inference to verify the Mongo Static Raft protocol semi-automatically.
We automatically inferred a local invariant for the smallest component, which suggests that our technique would benefit from more granular decompositions.
In the future, we plan to investigate techniques for both automated and more granular decompositions.
Ultimately, we plan to fully automate our compositional verification framework in Sec.~\ref{sec:ii-infer}, in which decomposition, bridge formula inference, and local invariant inference are all automated.

\begin{acks}
This project was supported by a \TLA{} Foundation grant.
\end{acks}

%
%

\printbibliography

\pagebreak
\appendix
\section*{Appendix}

\section{Proof of Thm.~\ref{thm:ii-implies-safe}}
\label{apx:ii-implies-safe}
We now prove Thm.~\ref{thm:ii-implies-safe} which states: $I \vdash \ag{A} C \ag{G}$ implies $\vdash \ag{A} C \ag{G}$.
\begin{proof}
    Assume that $I \vdash \ag{A} C \ag{G}$.
    Using the inductive invariant proof method, the three equations in Def.~\ref{def:local-ind-inv} establish $(vars, Init \land A, Next \land A \land A')^p \models \Box G$.
    However, notice that $(vars, Init \land A, Next \land A \land A')^p$ is identical to $(vars, Init \land A, Next \land A')^p$, from which we can infer $\vdash \ag{A} C \ag{G}$ by Def.~\ref{def:ag-contract}.
\end{proof}

\section{Proof of Soundness for State-Based Contract Composition}
\label{sec:theory-soundness}
\begin{figure}
    \centering
    \begin{align*}
        &\inferrule*[lab=interf-free]{I \vdash \ag{A} C_1 \ag{G}}{I \vdash \ag{A} C_1 \parallel C_2 \ag{G}}
        \quad
        \inferrule*[lab=trans-inv]{I_1 \vdash \ag{A} C \ag{R} \\ I_2 \vdash \ag{R} C \ag{G}}{I_1 \land I_2 \vdash \ag{A} C \ag{G}}
    \end{align*}
    \caption{Basic inference rules for proving soundness of the composition rules.}
    \label{fig:soundness-rules}
\end{figure}
In this appendix, we show that the composition rules for the state-based theory are sound.
Our proofs are based on two observations, which we encode formally as inference rules.
The first observation is that components do not share state variables, and hence are interference-free.
This observation is captured by inference rule \textsc{interf-free}, which we present in Fig.~\ref{fig:soundness-rules}.
The second observation, captured by inference rule \textsc{trans-inv} in Fig.~\ref{fig:soundness-rules}, is that each individual component has the \textit{transitivity of invariance} property.
We will now show that these two inference rules are sound, which we will then use as lemmas to prove that the composition rules are sound.
\begin{lemma}
    \label{lem:interf-free}
    \textsc{interf-free} is sound.
\end{lemma}
\begin{proof}
    Suppose that $I \vdash \ag{A} C_1 \ag{G}$, then by definition we have the following three facts: (1) $Init_1 \land A \implies I$, (2) $I \land Next_1 \land A \land A' \implies I'$, and (3) $I \implies G$.
    Our goal is to prove that $I \vdash \ag{A} C_1 \parallel C_2 \ag{G}$.
    Because $Init_1 \land Init_2 \land A \implies Init_1 \land A$ and also because of fact (1), we can conclude initiation.
    Also, due to fact (3), $I$ implies safety.
    Therefore, it remains to prove consecution.
    
    We will prove consecution by the possible cases on the actions of $C_1 \parallel C_2$.
    The three cases for an action $a \in \act (C_1 \parallel C_2)$ are (i) $a \in (\act C_1) \cap (\act C_2)$, (ii) $a \in \act C_1$ and $a \notin \act C_2$, or (iii) $a \notin \act C_1$ and $a \in \act C_2$.
    In each possible case, will prove consecution, i.e. that $I \land a \land A \land A' \implies I'$.
    \begin{enumerate}[label=(\roman*)]
        \item In this case, $a \implies Next_1$.
            Consecution follows due to this as well as fact (2).
        \item We have $a \implies Next_1$ in this case as well, and therefore consecution also follows due to fact (2).
        \item In this case, $a \implies SV(C_1)' = SV(C_1)$, where $SV(C_1)' = SV(C_1)$ is an abuse of notation that indicates that the state variables of $C_1$ are unchanged.
            Furthermore, by Def.~\ref{def:local-ind-inv}, we have $SV(I) \subseteq SV(C_1)$ which implies $a \implies SV(I)' = SV(I)$.
            Finally, consecution follows because $I \land SV(I)' = SV(I) \implies I'$.
    \end{enumerate}
    Therefore, we have shown $I \vdash \ag{A} C_1 \parallel C_2 \ag{G}$ as desired.
\end{proof}
\begin{lemma}
    \label{lem:trans-inv}
    \textsc{trans-inv} is sound.
\end{lemma}
\begin{proof}
    Suppose that $I_1 \vdash \ag{A} C \ag{R}$ and $I_2 \vdash \ag{R} C \ag{G}$.
    Then we have the following three facts: (1) $Init \implies I_1$ and $Init \implies I_2$, (2) $I_1 \land Next \land A \land A' \implies I_1'$ and $I_2 \land Next \land R \land R' \implies I_2'$, and (3) $I_1 \implies R$ and $I_2 \implies G$.
    Our goal is to prove that $I_1 \land I_2 \vdash \ag{A} C \ag{G}$.
    Initiation and safety follow by facts (1) and (3) respectively.
    Then, the following equations establish consecution, where the right-hand side column indicates the reason for each implication in parentheses.
    \begin{align}
        &I_1 \land I_2 \land Next \land A \land A' && \\
        \implies & I_1 \land R \land I_2 \land Next \land A \land A' && (I_1 \implies R) \label{eqn:trans-inv1} \\
        \implies & I_1' \land R \land I_2 \land Next && (I_1 \land Next \land A \land A' \implies I_1') \label{eqn:trans-inv2} \\
        \implies & I_1' \land R \land R' \land I_2 \land Next && (I_1' \implies R') \label{eqn:trans-inv3} \\
        \implies & I_1' \land I_2' && (I_2 \land Next \land R \land R' \implies I_2') \label{eqn:trans-inv4}
    \end{align}
    In the equations above, the reasons in (\ref{eqn:trans-inv1}) and (\ref{eqn:trans-inv3}) follow due to fact (3), while the reasons in (\ref{eqn:trans-inv2}) and (\ref{eqn:trans-inv4}) follow due to fact (2).
\end{proof}
The two lemmas above show that the \textsc{interf-free} and \textsc{trans-inv} inference rules (Fig.~\ref{fig:soundness-rules}) are sound.
We will now use these two lemmas to show that the compositional inference rules (Fig.~\ref{fig:composition-rules}) are sound.
We also include a proof that \textsc{naive-comp} is sound for completness.
\begin{theorem}
    \textsc{naive-comp} is sound.
\end{theorem}
\begin{proof}
    Suppose that $I_1 \vdash \ag{A} C_1 \ag{R}$ and $I_2 \vdash \ag{R} C_2 \ag{G}$.
    By Lemma~\ref{lem:interf-free} and the fact that $\parallel$ is commutative, we have $I_1 \vdash \ag{A} C_1 \parallel C_2 \ag{R}$ and $I_2 \vdash \ag{R} C_1 \parallel C_2 \ag{G}$.
    Finally, the theorem follows with Lemma~\ref{lem:trans-inv} by using the \textsc{trans-inv} rule on $C_1 \parallel C_2$.
\end{proof}
\begin{theorem}
    \label{thm:bridge-comp}
    \textsc{bridge-comp} is sound.
\end{theorem}
\begin{proof}
    Suppose that $I_1 \vdash \ag{A} C_1 \parallel B \ag{R}$ and $I_2 \vdash \ag{R} B \parallel C_2 \ag{G}$.
    By Lemma~\ref{lem:interf-free} and the fact that $\parallel$ is commutative, we have $I_1 \vdash \ag{A} C_1 \parallel B \parallel C_2 \ag{R}$ and $I_2 \vdash \ag{R} C_1 \parallel B \parallel C_2 \ag{G}$.
    Finally, the theorem follows with Lemma~\ref{lem:trans-inv} by using the \textsc{trans-inv} rule on $C_1 \parallel B \parallel C_2$.
\end{proof}
\begin{theorem}
    \textsc{aux-comp} is sound.
\end{theorem}
\begin{proof}
    Suppose $I_1 \vdash \ag{A} C_1 \parallel B \ag{R}$, $I_2 \vdash \ag{R} B \parallel C_2 \ag{G}$, and $Aux\ B$.
    By Thm~\ref{thm:bridge-comp} and Thm.~\ref{thm:ii-implies-safe}, we have $\vdash \ag{A} C_1 \parallel B \parallel C_2 \ag{G}$.
    Finally, because $Aux\ B$, we have $\vdash \ag{A} C_1 \parallel C_2 \ag{G}$ by Def.~\ref{def:aux-component}.
\end{proof}

\section{Proof of Thm.~\ref{thm:b-is-aux}}
\label{apx:b-is-aux}
We now prove Thm.~\ref{thm:b-is-aux} which states: Let $\Box\phi$ be an action invariant, then $\mathcal{B}(\Box\phi)$ is an auxiliary component.
\begin{proof}
    Let $\ag{A} C \ag{G}$ be a state-based contract such that $\vdash \ag{A} C \parallel \mathcal{B}(\Box\phi) \ag{G}$; by Def.~\ref{def:aux-component}, the proof obligation is to show that $\vdash \ag{A} C \ag{G}$.
    Let $C = (vars,Init,Next)^p$.
    By Def.~\ref{def:ag-contract} and the definition of parallel composition, we can infer that $(vars,Init \land A, Next \land A')^p \parallel \mathcal{B}(\Box\phi) \models \Box G$.
    However, all actions in $\mathcal{B}(\Box\phi)$ are enabled in every state, meaning that $\mathcal{B}(\Box\phi)$ allows all possible behaviors.
    We can therefore infer $(vars,Init \land A, Next \land A')^p \models \Box G$, from which the theorem follows by Def.~\ref{def:ag-contract}.
\end{proof}

\section{Proof of Thm.~\ref{thm:sfl-aut-equiv}}
\label{apx:proof-sfl-aut-equiv}
We now dedicate this appendix to proving that action invariants are semantically equivalent to their transition system counterpart given by $\mathcal{T}$.
Formally, the proof obligation is to show that $\mathcal{L}(\Box\phi) = \mathcal{L}(\mathcal{T}(\Box\phi))$, where $\Box\phi$ is an arbitrary action invariant.
This is exactly what we will prove in Thm.~\ref{thm:sfl-aut-equiv} at the end of this subsection; however, we first prove two helper lemmas.
\begin{lemma}
    \label{lem:exists-st-tr}
    There exists a unique state-based behavior in $\mathcal{B}(\Box\phi)$ that corresponds to executing the actions in $\sigma$.
\end{lemma}
\begin{proof}
    The behavior exists because all actions are always enabled in $\mathcal{B}(\Box\phi)$ and the behavior is unique because $\mathcal{B}(\Box\phi)$ is deterministic.
\end{proof}
\begin{lemma}
    \label{lem:act-to-st-tr}
    Let $\tau$ be the state-based behavior in $\mathcal{B}(\Box\phi)$ that corresponds to executing the actions in $\sigma$ (by Lemma~\ref{lem:exists-st-tr}).
    Then, $\sigma \models \Box\phi$ if and only if $\tau \models \Box\mathcal{R}(\phi)$.
\end{lemma}
\begin{proof}
    The proof obligation is to show that for all $i \geq 0$, we have $\emptyset \vdash \sigma,i \models \phi$ if and only if $\emptyset \vdash \tau_i \models \mathcal{R}(\phi)$.
    However, it suffices to prove the following stronger statement: for all well-formed environments $E$ (environments that assign values to the free variables in $\phi$) and for all $i \geq 0$, we have $E \vdash \sigma,i \models \phi$ if and only if $E \vdash \tau_i \models \mathcal{R}(\phi)$.
    We will prove this statement by structural induction for \flogic{} over $\phi$.
    
    In the base case, we have $\phi = f(r)$, where $f = (s,v,T)$ is a fluent and $r$ are arguments to the fluent.
    Let a well-formed environment $E$ and $i \geq 0$ be given.
    Furthermore, let $v_i$ be the value of the variable $v$ in the state $\tau_i$.
    Then, $E \vdash \sigma,i \models f(r)$ if and only if $(f|\sigma_0\dots\sigma_i)[r[E]]$ if and only if $v_i[r[E]]$ if and only if $E \vdash \tau_i \models \mathcal{R}(f(r))$.
    
    Since $\phi$ is a non-temporal formula, it suffices to only consider non-temporal connectives for the inductive step.
    In the case that $\phi = \psi_1 \lor \psi_2$, let a well-formed environment $E$ (for $\phi$) and $i \geq 0$ be given.
    $E$ must also be well-formed for $\psi_1$ and $\psi_2$, since no free-variables are introduced by the disjunction.
    Therefore, we can invoke the inductive hypothesis to show that $E \vdash \sigma,i \models \psi_1$ if and only if $E \vdash \tau_i \models \mathcal{R}(\psi_1)$ and also $E \vdash \sigma,i \models \psi_2$ if and only if $E \vdash \tau_i \models \mathcal{R}(\psi_2)$.
    However, this allows us to conclude that $E \vdash \sigma,i \models \phi$ if and only if $E \vdash \tau_i \models \mathcal{R}(\phi)$ by the definition of $\phi$ and also the definition of $\mathcal{R}$ over disjunctions.
    The remaining connectives are similar.
\end{proof}
We now provide a proof for Thm.~\ref{thm:sfl-aut-equiv}, which states that for any action invariant $\Box\phi$, we have $\mathcal{L}(\Box\phi) = \mathcal{L}(\mathcal{T}(\Box\phi))$.
\begin{proof}
    We will prove that for any action-based behavior $\sigma$, $\sigma \models \Box\phi$ if and only if $\sigma \models \mathcal{T}(\Box\phi)$.
    Let $\tau$ be the state-based behavior in $\mathcal{B}(\Box\phi)$ that corresponds to executing the actions in $\sigma$ (by Lemma~\ref{lem:exists-st-tr}).
    \begin{align*}
        \sigma \models \Box\phi &\text{ if and only if } \tau \models \Box\mathcal{R}(\phi) && \text{By Lemma~\ref{lem:act-to-st-tr}.} \\
        \tau \models \Box\mathcal{R}(\phi) &\text{ if and only if } \tau \models \mathcal{T}(\Box\phi) && \text{By Def.~\ref{def:sfl-st}.} \\
        \tau \models \mathcal{T}(\Box\phi) &\text{ if and only if } \sigma \models \mathcal{T}(\Box\phi) && \text{By the assumption that $\tau$ corresponds to $\sigma$ in} \\
        & && \text{$\mathcal{B}(\Box\phi)$, and hence also $\mathcal{T}(\Box\phi)$. For ``only if'',}\\
        & && \text{also because $\mathcal{T}(\Box\phi)$ is deterministic.}
    \end{align*}
    Together, the equations above imply the intended result.
\end{proof}

\section{Proof of Thm.~\ref{thm:action-ii-sound}}
\label{apx:action-ii-sound}
We now prove Thm.~\ref{thm:action-ii-sound} which states: $I \vdash \sag{\alpha} C \sag{\gamma}$ implies $\vdash \sag{\alpha} C \sag{\gamma}$
\begin{proof}
    Assume that $I \vdash \sag{\alpha} C \sag{\gamma}$.
    Then, from both Def.~\ref{def:sfl-local-ii} and Thm.~\ref{thm:ii-implies-safe}, we can infer that $\vdash \ag{\mathcal{R}(\alpha)} \mathcal{B}(\alpha) \parallel C \parallel \mathcal{B}(\gamma) \ag{\mathcal{R}(\gamma)}$.
    Let $\mathcal{B}(\alpha) = (vars,Init,Next)^p$.
    By Def.~\ref{def:ag-contract} and the definition of parallel composition we can infer $(vars, Init \land \mathcal{R}(\alpha), Next \land \mathcal{R}(\alpha)')^p \parallel C \parallel \mathcal{B}(\gamma) \models \Box \mathcal{R}(\gamma)$.
    By Def.~\ref{def:sfl-st}, we see that this is equivalent to the statement $\mathcal{T}(\Box\alpha) \parallel C \parallel \mathcal{B}(\gamma) \models \Box \mathcal{R}(\gamma)$.

    Now consider an action-based behavior $\sigma$ such that $\sigma \models \mathcal{T}(\Box\alpha) \parallel C$; we will show that $\sigma \models \mathcal{T}(\Box\gamma)$ to complete the proof.
    Because $\mathcal{B}(\gamma)$ has every action enabled in all states, it must also be the case that $\sigma \models \mathcal{T}(\Box\alpha) \parallel C \parallel \mathcal{B}(\gamma)$.
    By Lemma~\ref{lem:exists-st-tr}, there exists a unique state based behavior $\tau$ in $\mathcal{B}(\gamma)$ that corresponds to $\sigma$.
    This in turn implies that $\tau \models \mathcal{T}(\Box\alpha) \parallel C \parallel \mathcal{B}(\gamma)$, which also implies that $\tau \models \Box \mathcal{R}(\gamma)$.
    Because $\tau \models \mathcal{B}(\gamma)$ and $\tau \models \Box\mathcal{R}(\gamma)$, we can infer that $\tau \models \mathcal{T}(\Box\gamma)$.
    Finally, by the assumption that $\sigma$ corresponds to $\tau$, we see that $\sigma \models \mathcal{T}(\Box\gamma)$.
\end{proof}

\section{Soundness of Action-Based Contract Composition}
\label{apx:action-composition-sound}
\begin{theorem}
    \label{thm:sfl-comp-sound}
    Rule \textsc{sfl-comp} is sound.
\end{theorem}
\begin{proof}
    Suppose that $I_1 \vdash \sag{\alpha} C_1 \sag{\rho}$ and $I_2 \vdash \sag{\rho} C_2 \sag{\gamma}$.
    By Def.~\ref{def:sfl-ag-contract}, we also have $I_1 \vdash \ag{\mathcal{R}(\alpha)} \mathcal{B}(\alpha) \parallel C_1 \parallel \mathcal{B}(\rho) \ag{\mathcal{R}(\rho)}$ and $I_2 \vdash \ag{\mathcal{R}(\rho)} \mathcal{B}(\rho) \parallel C_2 \parallel \mathcal{B}(\gamma) \ag{\mathcal{R}(\gamma)}$.
    Using the \textsc{bridge-comp} inference rule (Thm.~\ref{thm:bridge-comp}), we can infer that $I_1 \land I_2 \vdash \ag{\mathcal{R}(\alpha)} \mathcal{B}(\alpha) \parallel C_1 \parallel \mathcal{B}(\rho) \parallel C_2 \parallel \mathcal{B}(\gamma) \ag{\mathcal{R}(\gamma)}$.
    Finally, we see that $I_1 \land I_2 \vdash \sag{\alpha} C_1 \parallel \mathcal{B}(\rho) \parallel C_2 \sag{\gamma}$ by Def.~\ref{def:sfl-ag-contract}.
\end{proof}
\begin{theorem}
    Rule \textsc{sfl-safe} is sound.
\end{theorem}
\begin{proof}
    Suppose that $I_1 \vdash \sag{\alpha} C_1 \sag{\rho}$ and $I_2 \vdash \sag{\rho} C_2 \sag{\gamma}$.
    By Thm.~\ref{thm:sfl-comp-sound}, we can infer $I_1 \land I_2 \vdash \sag{\alpha} C_1 \parallel \mathcal{B}(\rho) \parallel C_2 \sag{\gamma}$.
    By Def.~\ref{def:sfl-ag-contract}, we also have $I_1 \land I_2 \vdash \ag{\mathcal{R}(\alpha)} \mathcal{B}(\alpha) \parallel C_1 \parallel \mathcal{B}(\rho) \parallel C_2 \parallel \mathcal{B}(\gamma) \ag{\mathcal{R}(\gamma)}$.
    However, by Thm.~\ref{thm:ii-implies-safe}, we can also infer $\vdash \ag{\mathcal{R}(\alpha)} \mathcal{B}(\alpha) \parallel C_1 \parallel \mathcal{B}(\rho) \parallel C_2 \parallel \mathcal{B}(\gamma) \ag{\mathcal{R}(\gamma)}$.
    The theorem then follows by Def.~\ref{def:sfl-ag-contract}.
\end{proof}

\section{Proof of Thm.~\ref{thm:hybrid-ii-sound}}
\label{apx:hybrid-ii-sound}
We now prove Thm.~\ref{thm:hybrid-ii-sound} which states: $I \vdash \sag{\alpha} C \sag{G}$ implies $\vdash \sag{\alpha} C \sag{G}$
\begin{proof}
    Assume that $I \vdash \sag{\Box\alpha} C \sag{G}$.
    Then, from both Def.~\ref{def:sfl-local-ii} and Thm.~\ref{thm:ii-implies-safe}, we can infer that $\vdash \ag{\mathcal{R}(\alpha)} \mathcal{B}(\Box\alpha) \parallel C \ag{G}$.
    By Def.~\ref{def:ag-contract} and the definition of parallel composition, we can also infer $(vars, Init \land \mathcal{R}(\alpha), Next \land \mathcal{R}(\alpha)')^p \parallel C \models \Box G$, where $\mathcal{B}(\Box\alpha) = (vars,Init,Next)^p$.
    By Def.~\ref{def:sfl-st}, we see that this is equivalent to the statement $\mathcal{T}(\Box\alpha) \parallel C \models \Box G$.
    Finally, the theorem is proved by Def.~\ref{def:sfl-ag-contract}.
\end{proof}

\section{Soundness of Hybrid Contract Composition}
\label{apx:hybrid-composition-sound}
We now prove that the two inference rules \textsc{hybrid-comp} and \textsc{hybrid-safe} are sound.
\begin{theorem}
    \label{thm:hybrid-comp-sound}
    The rule \textsc{hybrid-comp} is sound.
\end{theorem}
\begin{proof}
    Suppose that $I_1 \vdash \sag{\alpha} C_1 \sag{\rho}$ and $I_2 \vdash \sag{\rho} C_2 \sag{G}$.
    By Def.~\ref{def:sfl-ag-contract}, we have $I_1 \vdash \ag{\mathcal{R}(\alpha)} \mathcal{B}(\alpha) \parallel C_1 \parallel \mathcal{B}(\rho) \ag{\mathcal{R}(\rho)}$ and by Def.~\ref{def:hybrid-ag-contract} we also have $I_2 \vdash \ag{\mathcal{R}(\rho)} \mathcal{B}(\rho) \parallel C_2 \ag{G}$.
    Using the \textsc{bridge-comp} inference rule (Thm.~\ref{thm:bridge-comp}), we can infer that $I_1 \land I_2 \vdash \ag{\mathcal{R}(\alpha)} \mathcal{B}(\alpha) \parallel C_1 \parallel \mathcal{B}(\rho) \parallel C_2 \ag{G}$.
    Finally, we see that $I_1 \land I_2 \vdash \sag{\alpha} C_1 \parallel \mathcal{B}(\rho) \parallel C_2 \sag{G}$ by Def.~\ref{def:hybrid-ag-contract}.
\end{proof}
\begin{theorem}
    The rule \textsc{hybrid-safe} is sound.
\end{theorem}
\begin{proof}
    Suppose that $I_1 \vdash \sag{\alpha} C_1 \sag{\rho}$ and $I_2 \vdash \sag{\rho} C_2 \sag{G}$.
    By Thm.~\ref{thm:hybrid-comp-sound}, we can infer $I_1 \land I_2 \vdash \sag{\alpha} C_1 \parallel \mathcal{B}(\rho) \parallel C_2 \sag{G}$.
    By Def.~\ref{def:hybrid-ag-contract}, we also have $I_1 \land I_2 \vdash \ag{\mathcal{R}(\alpha)} \mathcal{B}(\alpha) \parallel C_1 \parallel \mathcal{B}(\rho) \parallel C_2 \ag{G}$.
    However, by Thm.~\ref{thm:ii-implies-safe}, we can also infer $\vdash \ag{\mathcal{R}(\alpha)} \mathcal{B}(\alpha) \parallel C_1 \parallel \mathcal{B}(\rho) \parallel C_2 \ag{G}$.
    The theorem then follows by Def.~\ref{def:hybrid-ag-contract}.
\end{proof}

\section{Bridge and Fluent Definitions for Mongo Static Raft}
\label{apx:mongo-fluent-defs}
\begin{figure}
    \begin{subfigure}[b]{0.48\textwidth}
        \vspace{2mm}
        {\small \input{tla/msr_committedThisTerm}}
        \caption{Fluent definition for \textit{committedThisTerm}.}
        \label{fig:mongo-committedThisTerm}
    \end{subfigure}
    \hfill
    \begin{subfigure}[b]{0.47\textwidth}
        \vspace{2mm}
        {\small \input{tla/msr_globalCurrentTerm}}
        \caption{Fluent definition for \textit{globalCurrentTerm}.}
        \label{fig:mongo-globalCurrentTerm}
    \end{subfigure}
    \begin{subfigure}[b]{0.48\textwidth}
        \vspace{2mm}
        {\small \input{tla/msr_reqThisTerm}}
        \caption{Fluent definition for \textit{reqThisTerm}.}
        \label{fig:mongo-reqThisTerm}
    \end{subfigure}
    \begin{subfigure}[t]{0.99\textwidth}
        \begin{align*}
            &\land\ \forall t \in \mathbb{N} : \forall s \in Server : committedThisTerm(t,s) \implies globalCurrentTerm(t) \\
            &\land\ \forall t \in \mathbb{N} : \forall s \in Server : reqThisTerm(s,t) \implies currentLeader(s,t) \\
            &\land\ \forall t \in \mathbb{N} : \forall s \in Server : committedThisTerm(t,s) \implies currentLeader(s,t) \\
            &\land\ \forall t \in \mathbb{N} : \A s \in Server : currentLeader(s,t) \implies leaderAtTermServ(t,s) \\
            &\land\ \forall t_1,t_2 \in \mathbb{N} : \forall s \in Server : \\
            &\hspace{3mm}(leaderAtTermServ(t_1,s) \land globalCurrentTerm(t_2)) \implies (t_1 \leq t_2) \\
            &\land\ \forall t \in \mathbb{N} : \forall s_1,s_2 \in Server : \\
            &\hspace{3mm}(leaderAtTermServ(t,s_1) \land leaderAtTermServ(t,s_2)) \implies (s_1 = s_2)
        \end{align*}
        \caption{The entire bridge formula $\rho_l$ for the $Log$ and $StateTerm$ components, including the \textit{one leader per term} conjunct.}
        \label{fig:mongo-log-bridges}
    \end{subfigure}
    \caption{Additional definitions for the MongoStaticRaft protocol.}
    \label{fig:mongo-apx}
\end{figure}

\end{document}